\newtheorem{theorem}{Theorem}
\newtheorem{definition}[theorem]{Definition}
\newenvironment{proof}[1][Proof]{\noindent\textbf{#1.} }{\ \rule{0.5em}{0.5em}}
\newcommand{\bra}[1]{\langle#1|} \newcommand{\ket}[1]{|#1\rangle}
\newcommand{\ketbra}[2]{|#1\rangle\!\langle#2|}
\newcommand{\braket}[2]{\langle#1|#2\rangle}
\newcommand{\tr}{\text{tr}}
\newcommand{\rank}{\text{rank}}
\tikzset{
    >=stealth',
    punkt/.style={
           rectangle,
           rounded corners,
           draw=black, very thick,
           text width=6.5em,
           minimum height=2em,
           text centered},
    pil/.style={
           ->,
           thick,
           shorten <=2pt,
           shorten >=2pt,},
  on each segment/.style={
    decorate,
    decoration={
      show path construction,
      moveto code={},
      lineto code={
        \path [#1]
        (\tikzinputsegmentfirst) -- (\tikzinputsegmentlast);
      },
      curveto code={
        \path [#1] (\tikzinputsegmentfirst)
        .. controls
        (\tikzinputsegmentsupporta) and (\tikzinputsegmentsupportb)
        ..
        (\tikzinputsegmentlast);
      },
      closepath code={
        \path [#1]
        (\tikzinputsegmentfirst) -- (\tikzinputsegmentlast);
      },
    },
  },
  mid arrow/.style={postaction={decorate,decoration={
        markings,
        mark=at position .5 with {\arrow[#1]{stealth'}}
      }}}
}
\newcommand\bellbra[2]{
    \draw[fill] (#1,#2) circle [radius=0.1];
    \draw [thick] (#1-2,#2) -- (#1+2,#2);
    \draw[thick,->] (#1,#2) -- (#1-1,#2);
    \draw[thick,->] (#1,#2) -- (#1+1,#2);
}
\newcommand\bellket[2]{
    \draw[fill] (#1,#2) circle [radius=0.1];
    \draw [thick] (#1-2,#2) -- (#1+2,#2);
    \draw[thick,->] (#1-2,#2) -- (#1-1,#2);
    \draw[thick,->] (#1+2,#2) -- (#1+1,#2);
}
\newcommand\arrowedcurve[6]{
	\draw[thick,postaction={on each segment={mid arrow}}] (#1,#2) to [out=#5,in=#6] (#3,#4);
}
\newcommand\bluearrowedsegment[4]{
	\draw[blue,thick,postaction={on each segment={mid arrow}}] (#1,#2) -- (#3,#4);
}
\newcommand\blueketsegment[5]{
\bluearrowedsegment{#1}{#2}{#1/2+#3/2}{#2/2+#4/2}
\bluearrowedsegment{#3}{#4}{#1/2+#3/2}{#2/2+#4/2}
\draw[fill,blue] (#1/2+#3/2,#2/2+#4/2) circle [radius=#5];
}
\newcommand\bluebrasegment[5]{
\bluearrowedsegment{#1/2+#3/2}{#2/2+#4/2}{#1}{#2}
\bluearrowedsegment{#1/2+#3/2}{#2/2+#4/2}{#3}{#4}
\draw[fill,blue] (#1/2+#3/2,#2/2+#4/2) circle [radius=#5];
}
\newcommand\arrowedsegment[4]{
	\draw[thick,postaction={on each segment={mid arrow}}] (#1,#2) -- (#3,#4);
}
\definecolor{greytext}{gray}{0.5}
\titleformat*{\section}{\singlespacing\raggedright\bfseries\Large}
\titleformat*{\subsection}{\singlespacing\raggedright\bfseries\large}
\titleformat*{\subsubsection}{\singlespacing\raggedright\bfseries}
\titleformat*{\paragraph}{\singlespacing\raggedright\itshape}
\DeclareUrlCommand\DOI{}
\title{Thesis: Tensor networks for dynamic spacetimes}
\author{Alex May}
\institution{The University of British Columbia}
\begin{document}


\maketitle

\chapter{Abstract}

Tensor networks give simple representations of complex quantum states. They have proven useful in the study of condensed matter systems and conformal fields, and recently have provided toy models of AdS/CFT. Underlying the tensor network - AdS/CFT connection is the association of a graph geometry with the tensor network. In the context of the AdS/CFT correspondence tensor network models have so far been limited to describing static spacetimes. In this thesis we look to extend tensor network models of AdS/CFT by describing the geometry of a dynamic spacetime using a tensor network. We provide a review of tensor networks in the context of AdS/CFT to motivate this extension, before proposing modifications of holographic tensor network models that capture features of AdS/CFT with dynamic spacetimes.

This thesis includes the results of arXiv submission 1611.06220 \cite{may2017tensor}, along with a review of the relevant literature. 

\cleardoublepage



\renewcommand{\contentsname}{Table of Contents}
\tableofcontents
\cleardoublepage	





\textspacing		


\chapter{Acknowledgments}

This work was carried out under the supervision of Mark Van Raamsdonk, who was involved in discussions throughout this project and reviewed drafts of this manuscript. Charles Rabideau made useful contributions in the early stages of the project. We are also indebted to Michael Walter, Grant Salton, Zhao Yang, David Stephen, Jordan Cotler, and Adam Levine for helpful discussions. Dominik Neuenfeld and Jaehoon Lee provided feedback on early versions of the manuscript.

AM was partially supported by the It from Qubit Collaboration, which is sponsored by the Simons Foundation. AM was also supported by a CGSM award given by the National Research Council of Canada. This research benefited from the It from Qubit summer school held at the Perimeter Institute for Theoretical Physics. Research at Perimeter Institute is supported by the Government of Canada through Industry Canada and by the Province of Ontario through the Ministry of Economic Development \& Innovation.

AM also wishes to acknowledge the personal support of Mark Van Raamsdonk, whose flexible style of direction allowed the author to pursue varied directions of research, Jessica Allanach, for her patience and personal support, the Lertzman-Lepofsky's, whose welcoming home provided the backdrop for finding many of the results presented here, to Patrick Hayden for early guidance and encouragement, and to Fang Xi Lin, Eric Hanson, Joelle and Bernie Ducharme, and both the UBC and McGill physics communities. 


\mainmatter

\acresetall	

\include{intro}


\chapter{Holography and the AdS/CFT correspondence}

Gravity is well described by general relativity, which can be understood as an effective field theory. Leading order quantum corrections to general relativity can be calculated by standard effective field theory methods \cite{burgess2004quantum}. The effective fields approach however requires a high energy cutoff, and this description of gravity is expected to fail at energies near to and above this cutoff. 

One approach to finding a UV complete theory of gravity is to search for an appropriate field theory which is renormalizable and whose low energy description looks like the (non-renormalizable) terms in the Einstein-Hilbert action. Indeed, the description of the weak force was initially in terms of a non-renormalizable Lagrangian, which was later found to have a high energy renormalizable description. 

There are basic reasons however to expect this approach to fail and a more radical proposal to be necessary. One such reason is the understanding that a UV complete theory of quantum gravity must be holographic \cite{hooft1994dimensional,susskind1995world}, which a standard field theory description is not. To understand what a holographic theory is we will briefly review some facts about black holes and quantum mechanics below. 

\section{The holographic principle}

Black holes provide a key low energy probe of high energy physics, and for this reason have long been at the focus of research in quantum gravity. An early important line of work on black holes concerned black hole thermodynamics, which showed that the quantities which describe a black hole - mass, area, charge, and angular momentum - obey relations identical to those of ordinary thermodynamics \cite{wald1994quantum, wald2010general}. In particular, the area of the black hole plays the role of the thermodynamic entropy in this analogy. 

Black hole thermodynamics lead Bekenstein to argue that the similarity between the black hole relations and thermodynamic relations is more than an analogy, and that in fact the area of the black hole really was the entropy of the black hole\cite{bekenstein1972black,bekenstein1973black}. A key argument of Bekensteins is that making the identification 
\begin{align}
S_{BH} = \frac{A}{4G}
\end{align}
would lead to
\begin{align}
\delta S_{total} = \delta(S_{BH} + S_{matter}) \geq 0.
\end{align}
Thus, identifying the black hole area as the entropy of the black hole correctly leads to the second law of thermodynamics being maintained in a black hole spacetime.

Bekensteins claim was put on firm ground by Hawking, who did a calculation in quantum field theory in a black hole background \cite{hawking1975particle} to determine that the black hole would radiate at a temperature
\begin{align}
T = \frac{1}{8\pi G M}.
\end{align}
This temperature is consistent with the thermodynamic relation $dE = T dS$ when the entropy is identified as $S=A/4G$, as argued by Bekenstein.

The simple statements above about black holes are already enough to argue for the holographic principle. Suppose we start with a physical system which is contained in a spherical spatial region $\Gamma$. Call the thermal entropy of this system $S(\Gamma)$. Then it is straightforward to show that
\begin{align}\label{eq:dofentropybound}
\log N \geq S(\Gamma),
\end{align}
where $N$ is the number of states available to the system $\Gamma$. We will suppose that the system has nearly achieved this bound, and has a maximal entropy so that  $S(\gamma) \approx \log N$.

Now suppose that we add matter to our spatial region $\Gamma$ until a black hole forms which occupies all of $\Gamma$. Then the regions entropy is $A / 4G$, as given by the Bekenstein-Hawking formula. The second law of thermodynamics says that the entropy after matter is added should be greater than before, so that
\begin{align}
S(\Gamma) \leq \frac{A}{4G}.
\end{align}
Combining this with the relation $S(\Gamma) \approx \log N$,
\begin{align}
\log N \leq \frac{A}{4G}.
\end{align}
That is, the number of degrees of freedom in a region $\Gamma$ is bounded above by the area of the region. 

This should come as a surprise. A field theory description of a typical system involves degrees of freedom at each spatial point, and thus the total number of degrees of freedom required scales with the volume. The field theory is only a low energy effective theory however, and our black hole arguments indicate that in fact the complete theory should have far fewer degrees of freedom. Motivated by this, we can state the holographic principle as follows: \emph{a UV complete theory of quantum gravity in $3+1$ dimensions should have an equivalent description in $2+1$ dimensions.} 

\section{Quantum information theory}\label{sec:qit}

The holographic principle tells us that a quantum theory of gravity can be written in one fewer spatial dimensions than expected. Because of the success of non-holographic physics, we know that at least low energy processes do have a simple, local Lagrangian description in $3+1$ dimensions. These processes should have equivalent descriptions in the lower dimensional theory, which already raises basic questions. Since naively a local quantum field theory in $3+1$ dimensions is described by a larger dimensional Hilbert space than a $2+1$ dimensional one, it must be some special sector of the QFT which is mapped into the lower dimensional theory. We can ask what the structure of this sector is, and how the mapping to 2+1 dimensions works precisely.

These considerations motivate the study of quantum information theory in the context of holography and quantum gravity. Information theorists have already considered similar questions to the above. Indeed, a primary goal of quantum information theory is to understand how one Hilbert space may be encoded into another. This area of study has been directed more specifically at storing information in a way which is safe against errors or losses, and goes by the name of quantum error correction.

For later use, we give a very brief introduction to quantum error correcting codes here. A detailed exposition can be found in Neilson 2002 \cite{nielsen2002quantum}. Construction of an error correcting code begins with the notion of a logical Hilbert space, ${H}_{logical}$ and a physical Hilbert space $\mathcal{H}_{phys}$. The error correcting code is defined by an encoding map ${E}:{H}_{logical}\rightarrow {H}_{phys}$ and a decoding map ${D}:{H}_{phys}\rightarrow {H}_{logical}$. The dimension of $H_{phys}$ will be larger than ${H}_{logical}$, and the image of the encoding map will form a subspace ${H}_{code}$. In standard terminology, it is the subspace ${H}_{code}$ which is known as the error correcting code.

We can say that an error correcting code protects against an error $A$ if we have that, for $\rho\in H_{logical}$,
\begin{align}
(D\circ A \circ E)(\rho) = \rho.
\end{align}
There are known necessary and sufficient conditions known as the \emph{error correction conditions} for when there exists a decoding map $D$ which corrects a certain error $A$, given the subspace ${H}_{code}$. 

A particular class of error correcting codes which will be of interest are the stabilizer codes. A stabilizer code is defined using a stabilizer group, which is itself a subgroup of the Pauli group on $n$ qubits. The Pauli group $G_n$ consists of all n-fold tensor products of Pauli operators, for instance $G_2$ is generated by the elements
\begin{align}
G_2 = \langle I\otimes X, X\otimes I, I\otimes Y,Y\otimes I\rangle.
\end{align}
A stabilizer group is typically defined by a set of generators drawn from the Pauli group. The stabilizer code is defined as the joint $+1$ eigenspace of the generators of the stabilizer group. 

We can illustrate the construction of a stabilizer code using a simple example. Our physical Hilbert space will consist of three qutrits, $H_{phys} = H_1\otimes H_2\otimes H_3$ and the stabilizer group $S$ is generated by the elements\footnote{As is convention in discussions of stabilizer codes, we have omitted the $\otimes$ between the Pauli operators. Thus $ZZZ$ should be read as $Z \otimes Z \otimes Z$.}
\begin{align}\label{eq:qutritstabilizers}
S_1 &= XXX \nonumber \\
S_2 &= ZZZ 
\end{align}
with $X$ and $Z$ the generalized Pauli operators acting on qutrits\footnote{These have the property that $X\ket{n}=\ket{n-1}$ and $Z\ket{n} = e^{2\pi i n/3}\ket{n}$}. We look for the space of states which have the property $S_i\ket{\psi}=\ket{\psi}$ for all $i$. One can confirm that the following states all share this property:
\begin{align}\label{eq:codesubspace}
\ket{0_L} &= \frac{1}{\sqrt{3}}(\ket{000}+\ket{111}+\ket{222}) \nonumber \\
\ket{1_L} &= \frac{1}{\sqrt{3}}(\ket{012}+\ket{201}+\ket{120}) \nonumber \\
\ket{2_L} &= \frac{1}{\sqrt{3}}(\ket{021}+\ket{102}+\ket{210}).
\end{align}
Since each of the $S_i\ket{\psi} = \ket{\psi}$ divide the Hilbert space into thirds, there should indeed be exactly $3$ linearly independent states in the code subspace.

The subspace defined by the span of the three states $\ket{0_L},\ket{1_L},\ket{2_L}$ forms an error correcting code which can correct against the erasure of any one of the qutrits. To see this, it is sufficient to show that there is a unitary which acts on only two of the qubits which will output the logical qutrit. That is that there exists $U_{12}$ such that
\begin{align}
U_{12}\otimes I \ket{i}_{123} = \ket{i}_1 \otimes \ket{\chi}_{23}.
\end{align}
with $\ket{\chi} = \frac{1}{\sqrt{3}}(\ket{00}+\ket{11}+\ket{22})$. One can readily but tediously find the matrix elements of $U_{12}$ by substituting the states $\ket{0_L},\ket{1_L},\ket{2_L}$ into the above relation. By symmetry of the stabilizers \ref{eq:qutritstabilizers}, we know that there must also exist such a $U_{23}$ and $U_{31}$ which recover $\ket{i}$ from the $H_2\otimes H_3$ or $H_3 \otimes H_1$ Hilbert spaces.

A key lesson to be taken from the theory of quantum error correction is that one Hilbert space may be encoded into another larger one with redundancy, in the sense that loss of part of the larger Hilbert space may still allow recovery of the encoded state. This is surprising, as it seems in tension with the no-cloning theorem of quantum mechanics. It is important though that despite the recoverability of the state after loss of part of the physical Hilbert space it is only ever possible to construct one copy of the logical state by acting on the physical Hilbert space. In the example above this plays out as the need to have a majority of the qutrits, 2 of 3, to perform the reconstruction.


\section{The AdS/CFT correspondence}\label{sec:introtoadscft}

At the time of writing there is only one well established theory of quantum gravity which realizes the holographic principle. This is the AdS/CFT correspondence; it describes quantum gravity in $d+1$ dimensions in an asymptotically $AdS$ background in terms of a $d$ dimensional conformal field theory living in Minkowski space. We will give an outline of the correspondence here, and refer the reader to the literature for a more complete discussion. Some pedagogical introductions can be found in Ammon \cite{ammon2015gauge} and Polchinksi \cite{polchinski2010introduction}.

In its strongest form, the AdS/CFT correspondence asserts the equivalence of string theory in an AdS background with certain conformal field theories. We will be concerned however with a limit where the string theory is approximated by quantum fields on a classical gravity background. To be precise, the equivalence gives that
\begin{align}\label{eq:maldacena}
Z_{CFT}[J] = Z_{AdS}[\Phi\rightarrow J],
\end{align}
where the CFT partition function is computed in the presence of some external sources denoted $J$, and the AdS partition function is computed with fields subjected to boundary conditions set by those sources.   In the remainder of this section we will unpack some of the ingredients of the equation above, and outline some needed features of this correspondence. 

We will first briefly describe AdS space. In the context of tensor networks it is AdS$_{2+1}$ which is most relevant, so to simplify our discussion we will focus on that case. A metric for AdS$_{2+1}$ is
\begin{align}\label{eq:adsmetric}
ds^2 = l_{AdS}^2(-\cosh^2\rho dt^2 + d\rho^2 + \sinh^2 \rho \,d\theta^2),
\end{align}
with $\rho \in \mathbb{R}^+$, $t \in \mathbb{R}^+$, $\theta\in [0,2\pi]$, and $l_{AdS}$ a length scale known as the AdS radius. Any constant time slice has metric $ds^2 = l_{AdS}^2(d\rho^2 + \sinh^2 \rho \,d\theta^2)$ which is just the Poincare plane, $\mathbb{H}_2$. We can visualize AdS$_{2+1}$ as a cylinder, as shown in figure \ref{fig:cylinder}. 

A second useful coordinate system for describing AdS is Poincare coordinates. These do not cover the entire AdS space, but a restricted region known as the Poincare patch. The metric is
\begin{align}
ds^2 = \frac{l_{AdS}^2}{z^2}( - dt^2 + dz^2+ dx_\mu dx^\mu)
\end{align}
with $0 \leq z < \infty$, and $t,x\in \mathbb{R}$. 

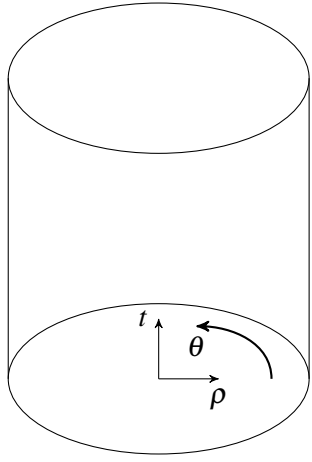
\begin{figure}
\begin{center}
\begin{tikzpicture}

\draw (0,0) ellipse (2cm and 1cm);
\draw (0,4) ellipse (2cm and 1cm);

\draw (-2,0) -- (-2,4);
\draw (2,0) -- (2,4);

\draw[->] (0,0) -- (0.8,0);
\node[below] at (0.8,0) {$\rho$};
\draw[->] (0,0) -- (0,0.8);
\node[left] at (0,0.8) {$t$};

\draw[thick,->] (1.5,0) to [out=90,in=0] (0.5,0.7);
\node[below] at (0.5,0.7) {$\theta$};

\end{tikzpicture}
\end{center}
\caption{The cylinder representing AdS$_{2+1}$. The boundary of the cylinder is at $\rho=\infty$.}
\label{fig:cylinder}
\end{figure}

AdS space has a boundary, which we can visualize as living on the surface of the cylinder. In global coordinates this boundary is at $\rho \rightarrow \infty$, while in Poincare coordinates it is at $z=0$. It is often useful to think of the boundary as the result of a limiting procedure, where a surface is drawn at $z=\epsilon$, this surface is treated as the boundary, and then later we take $\epsilon \rightarrow 0$. 

We should also very briefly introduce the other component of the AdS/CFT correspondence, conformal field theories. Most relevant for us are 1+1 dimensional conformal field theory, since these are dual according to the correspondence to string theory in AdS$_{2+1}$. A conformal field theory is a quantum field theory that has a larger symmetry group than the Lorentz group. In particular, conformal fields have no intrinsic length scale and are left invariant under scale transformations. 

We will be interested in spacelike geodesics of this AdS spacetime which are anchored on two boundary points. Starting with the metric \ref{eq:adsmetric} it is not difficult to show that these geodesics form semicircles. For later reference, we record that the length of such a geodesic is
\begin{align}
A = \frac{l_{AdS}}{2 G_N} \ln \left(\frac{L}{\epsilon} \right)
\end{align}
where $L$ is the size of the boundary interval on which the geodesic is anchored, and $z=\epsilon$ defines the cutoff surface we have used.

It is known \cite{calabrese2004entanglement} that the entanglement entropy of a single interval in a 1+1 dimensional conformal field theory is given by
\begin{align}\label{eq:ee}
S(A) = \frac{c}{3}\log \frac{L}{\epsilon}
\end{align}
where $c$ is known as the central charge, which is determined by the conformal field we are working with. $\epsilon$ is a UV cutoff indicating that modes of wavelength shorter than $\epsilon$ have been excluded from contributing to the entanglement entropy. 

Ryu and Takayanagi \cite{ryu2006holographic} pointed out that the length of the boundary anchored geodesics in AdS space and the entanglement entropy of the same boundary interval agree if one takes
\begin{align}
c = \frac{3 l_{AdS}}{2 G_N}
\end{align}
and identifies the UV cutoff in the CFT with the long distance $z$ cutoff in AdS. Indeed, both of these identifications had already been established via other lines of reasoning, which made compelling evidence for the identification of entanglement entropies in CFTs with lengths in AdS. More precisely, Ryu and Takayanagi conjectured the formula
\begin{align}
S(A) = \underset{\gamma_A}{\min} \,\,\frac{L(\gamma^A)}{4G},
\end{align}
where $\{\gamma_A\}$ are all the spacelike curves anchored on region A. This was conjectured to compute entanglement entropies in conformal fields dual to asymptotically AdS geometries. There are also corrections to the Ryu-Takayanagi formula when fields (aside from the metric) are present in the bulk. In particular the above is modified to
\begin{align}
S(A) = \underset{\gamma_A}{\min} \,\,\frac{L(\gamma^A)}{4G} + S(\rho_{W(A)})
\end{align}
where $S(\rho_{W(A)})$ is the entropy of the fields in the region enclosed by $\gamma_A$

Importantly, the Ryu-Takayanagi formula relies on their being a preferred time slicing of the AdS spacetime. The minimal surfaces $\gamma_A$ considered in the minimization must lie in a well chosen time slice. This must be the case, as its clear that if they could be chosen in any Cauchy slice of AdS with the region $A$ on its boundary one could always take the minimal length to be arbitrarily close to zero by making the slice close to lightlike. It turns out that for a spacetime with a timelike Killing vector (a static spacetime) and boundary regions which are at a constant time $t_0$, the bulk Cauchy slice defined by $t=t_0$ will contain the appropriate minimal curve.

For spacetimes without such a timelike Killing vector, which we call dynamic spacetimes, a more involved prescription for calculating boundary entanglement entropies from bulk geometry is required. In particular the Ryu-Takayanagi (RT) formula needs to be replaced by the Hubeny-Ranganmani-Takayanagi (HRT) formula, which reads
\begin{align}
S(A) = \underset{\gamma_A}{\text{extremal}} \frac{L(\gamma^A)}{4G}.
\end{align}
That is, the minimization procedure is simply replaced by an extremization. An equivalent\footnote{Actually the equivalence of HRT and maximin require the assumption of the null energy condition, but this is a very weak assumption.} formula for computing entanglement entropies using AdS geometry is the maximin formula, which states
\begin{align}
S(A) = \underset{\Sigma}{\text{max}} \left(\underset{\gamma_A}{\min}\,\, \frac{L(\gamma^A)}{4G}  \right).
\end{align}
That is, for each bulk Cauchy slice $\Sigma$ search through all possible curves $\gamma_A$, pick the shortest, and call it $\gamma_\Sigma$. The maximin formula states that the length of the longest of the $\gamma_\Sigma$ computes the entanglement entropy of $A$.

Finally, we return to the idea of AdS/CFT as an encoding of one Hilbert space into another. From the equality of partition functions \ref{eq:maldacena} we can understand how the encoding of the boundary Hilbert space into the bulk Hilbert space works. We specify a boundary state, then solve the bulk equations of motion to determine the bulk fields. To reverse the mapping, we take limits of the bulk fields as they go to the boundary and recover boundary data. 

The above gives a global prescription for determining bulk data from the boundary or vice versa. It is also interesting to ask, given a boundary region, what bulk region is determined by that boundary data? In fact, we have already learned how to determine some bulk data from a subregion of the boundary. The RT formula gives us the length of a certain bulk curve $\gamma_A$ from the data of a boundary interval $A$. Restricting attention to subintervals of $A$, we can determine the length of a infinite family of curves that sweep out the entire region enclosed by $\gamma_A$. Thus, the region enclosed by the minimal curve $\gamma_A$, which we will label $D(A)$, is a reasonable candidate for the bulk region that can be reconstructed from $A$. More covariantly, the candidate bulk region is the bulk domain of dependence of $D(A)$, which we call $W(A)$\footnote{The domain of dependence of $D(A)$ is defined as $W(A)=D(A)\cup J^+(A)\cup J^-(A)$ where $J^+(A)$ is the set of points for which all backward directed timelike curves pass though $D(A)$ and $J^-(A)$ is defined as the set of points for which all forward directed timelike curves pass through $D(A)$.}. The region $W(A)$ is known as the entanglement wedge of $A$. 

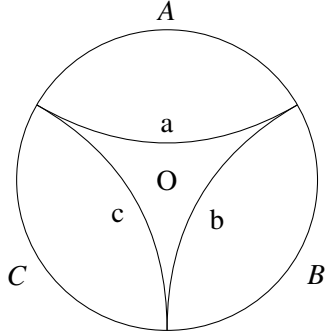
\begin{figure}
\begin{center}
\begin{tikzpicture}

\node at (0,0) {O};
\draw (0,0) circle (2);

\draw (0,-2) to [out=90,in=-150] (1.732,1);
\draw (0,-2) to [out=90,in=-30] (-1.732,1);
\draw (-1.732,1) to [out=-30,in=-150] (1.732,1);

\node[above] at (0,2) {$A$};
\node[below right] at (1.732,-1) {$B$};
\node[below left] at (-1.732,-1) {$C$};

\node[above] at (0,0.5) {a};
\node[below right] at (0.433,-0.25) {b};
\node[below left] at (-0.433,-0.25) {c};

\end{tikzpicture}
\caption{Illustration of the error correcting property of the bulk to boundary mapping. No single boundary region $A$, $B$, or $C$ is sufficient to reconstruct a bulk operator living at $O$, but any two regions $AB$, $BC$, or $CA$ is. This is because the minimal surface of $A$ (for example), which is labelled as $a$, does not enclose point $O$. However the minimal surface for $AB$, which is $c$, does. This is analogous to the three qutrit code introduced in section \ref{sec:qit}.}
\label{fig:adsthreequtrit}
\end{center}
\end{figure}

To define the idea of bulk reconstruction more concretely, we can consider an arbitrary operator $\mathcal{O}_{W(A)}$ that lives at point $x$ inside the entanglement wedge of $A$, $W(A)$. Acting with such an operator allows us to probe the bulk data, thus, our reconstruction idea implies that there should be a dual operator $\mathcal{O}_A$ that lives in the boundary theory and has the property that
\begin{align}\label{eq:errorcorrectionproperty}
\tr(\rho_{W(A)}\mathcal{O}_{W(A)}) = \tr(\rho_A \mathcal{O}_A).
\end{align}
We refer to the above as the \emph{error correction property}.

It has recently been understood that the RT formula actually implies that it is possible to reconstruct data inside the entanglement wedge \cite{harlow2017ryu}. Because a single bulk point lies in the entanglement wedge of many different boundary regions, this means the bulk to boundary encoding in AdS/CFT has the structure of an error correcting code, see \ref{fig:adsthreequtrit} for an illustration. In the context of AdS/CFT the encoding of bulk entanglement wedge data into the boundary has recently been understood \cite{jafferis2016relative,dong2016reconstruction,cotler2017entanglement}. Tensor network models were used early on to explore the entanglement wedge reconstruction idea \cite{almheiri2015bulk, pastawski2015holographic}, and consequently played a role in the development of the general understanding of the relation between the RT formula and quantum error correction. 

\chapter{Tensor networks}

In this section we develop the basics of the tensor network formalism. We emphasize two perspectives on networks which will later have connections to holography: the idea of a tensor network as a map, and the connection between the graph structure of a network and the entanglement of the state it prepares. We leave making the connection to holography to chapter \ref{sec:tnandholography}.

\section{Tensor network basics} \label{sec:basics}

\begin{figure} 
\begin{center}
\begin{tikzpicture}
\bellket{0}{0}
\node at (-2,-1) {(a)};

\bellbra{6}{0}
\node at (4,-1) {(b)};
\node at (0,3) {};
\end{tikzpicture}
\end{center}
\caption{(a) A maximally entangled state in the Hilbert space, represented in the graphical notation. (b) A maximally entangled state in the dual Hilbert space, represented in the graphical notation.}
\label{fig:bellpairs}
\end{figure}
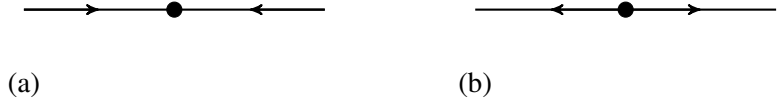

The tensor network formalism describes graphically the pattern of contraction of a set of simple objects to form a complex quantum state. The basic objects in the graphical formalism are vertices with some number of lines attached. Each vertex corresponds to a quantum state, and the lines each correspond to a ket or bra index. For instance
\begin{align}
\ket{\Psi^+} = \sum_{m=1}^D \ket{m}\ket{m}
\end{align}
is represented by figure \ref{fig:bellpairs} a. We attach a direction (inward or outward) to each line in the diagram, with inward arrows indicating ket indices and outward arrows indicating bra indices. Thus $\bra{\Psi^+}$ is represented as in figure \ref{fig:bellpairs} b. 

It will be convenient to write down quantum states without explicitly including their basis vectors. For example, the maximally entangled state $\ket{\Psi^+}$ is written $\delta_{ab}$, leaving the choice of basis implicit. In this notation ket indices are lowered and bra indices are raised, so $\bra{\Psi^+}$ becomes $\delta^{ab}$. More generally a quantum state
\begin{align} \label{eq:examplestate}
\ket{\phi} = \sum_{a,b} T_{ab} \ket{a}\ket{b}
\end{align}
is specified as $T_{ab}$. The upper and lower indices carry transformation rules with them. Since
\begin{align}
\ket{\phi} &= \sum_{a,b,c} T_a (U^\dagger)_{b}^a U^b_c \ket{c} = \sum_{b} T_b^\prime  \ket{b},
\end{align}
we see that lower indices transform according to $T_a\rightarrow T_a (U^\dagger)_{b}^a$ under a change of basis described by $\ket{b}=U^{b}_c\ket{c}$. Similarly under the same change of basis upper indices transform according to $T^a\rightarrow T^a U_a^b$.

\begin{figure}
\begin{center}
\begin{tikzpicture}[scale=0.3]
\draw[thick,postaction={on each segment={mid arrow}}] (8,3) -- (5,0);
\draw[thick,postaction={on each segment={mid arrow}}] (2,3) -- (5,0);
\draw[thick,postaction={on each segment={mid arrow}}] (8,-3) -- (5,0);
\draw[thick,postaction={on each segment={mid arrow}}] (2,-3) -- (5,0);
\draw[fill=white] (5,0) circle [radius=1]; 
\node at (5,0) {S};

\draw[thick,postaction={on each segment={mid arrow}}] (-8,3) -- (-5,0);
\draw[thick,postaction={on each segment={mid arrow}}] (-2,3) -- (-5,0);
\draw[thick,postaction={on each segment={mid arrow}}] (-8,-3) -- (-5,0);
\draw[thick,postaction={on each segment={mid arrow}}] (-2,-3) -- (-5,0);
\draw[fill=white] (-5,0) circle [radius=1];
\node at (-5,0) {T};

\draw[->,thick] (10,0)--(12,0);

\draw[fill] (21,3) circle [radius=0.25];
\draw[fill] (21,-3) circle [radius=0.25];

\draw[thick,postaction={on each segment={mid arrow}}] (21,3) to [out=0,in=135] (26,0);
\arrowedcurve{21}{3}{26}{0}{0}{135}
\arrowedcurve{21}{3}{16}{0}{180}{45}
\arrowedcurve{21}{-3}{16}{0}{180}{-45}
\arrowedcurve{21}{-3}{26}{0}{0}{-135}

\draw[thick,postaction={on each segment={mid arrow}}] (29,3) -- (26,0);
\draw[thick,postaction={on each segment={mid arrow}}] (29,-3) -- (26,0);
\draw[fill=white] (5+21,0) circle [radius=1]; 
\node at (5+21,0) {S};

\draw[thick,postaction={on each segment={mid arrow}}] (13,3) -- (16,0);
\draw[thick,postaction={on each segment={mid arrow}}] (13,-3) -- (16,0);
\draw[fill=white] (-5+21,0) circle [radius=1];
\node at (-5+21,0) {T};

\end{tikzpicture}
\end{center}
\caption{A basic example of a contraction of two quantum states into a tensor network. Algebraically, the objects at left are written $T_{abcd}$ and $S_{efgh}$. The object at right is $T_{abcd}\delta^{ce}\delta^{df}S_{efgh}$.}
\label{fig:examplenetwork}
\end{figure}
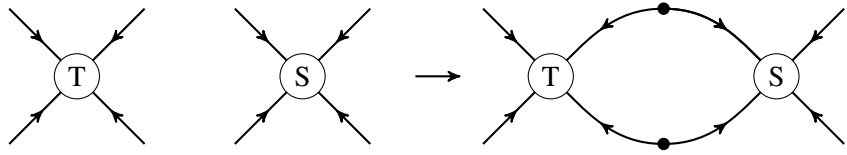

The basic operation of the tensor network formalism is the composition of two quantum states. Composition of two ket states, say $T_{abcd}$ and $S_{efgh}$ is performed by introducing maximally entangled bra states, 
\begin{align}
T_{abcd} \circ S_{efgh} \rightarrow T_{abcd}\delta^{ce}\delta^{df} S_{efgh}.
\end{align}
This is illustrated in figure \ref{fig:examplenetwork}. The contraction performed above was not the unique choice, since different pairs of indices could have been contracted. In general, to describe a pattern of contraction of two or more quantum states a graph is specified. States are associated with vertices, with each line attached to a vertex representing a particular index. The contraction is performed by placing maximally entangled pairs on the edges and connecting in going and out going lines. 

\begin{figure} 
\begin{center}
\begin{tikzpicture}[scale=0.92]

\draw[thick,fill=white] (-0.5,-0.5)--(+0.5,-0.5)--(+0.5,+0.5)--(-0.5,0.5)--(-0.5,-0.5);
\bellket{2.5}{0}
\node at (0,0) {$M$};
\draw[thick,postaction={on each segment={mid arrow}}] (-1.5,0) -- (-0.5,0);
\node at (-1.5,-1) {(a)};

\bellket{7.5}{0}
\draw[thick,fill=white] (9.5,-0.5)--(10.5,-0.5)--(10.5,+0.5)--(9.5,+0.5)--(9.5,-0.5);
\node at (10,0) {$M^T$};
\draw[thick,postaction={on each segment={mid arrow}}] (11.5,0) -- (10.5,0);
\node at (5.5,-1) {(b)};

\end{tikzpicture}
\end{center}
\caption{(a) An operator $M\otimes \mathbb{I}$ applied to a quantum state. (b) In the case where the vertex represents a maximally entangled state, the operator $M$ can be moved to the other subspace by taking the transpose.}
\label{fig:operatordiagram}
\centering
\end{figure}
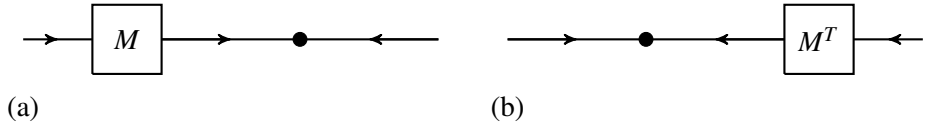

Operators acting on quantum states we represent as vertices having both inward directed and outward directed lines attached, and are written as tensors with upper and lower indices, for example ${M_a}^b$. Diagrammatically, applying an operator to a state is given by connecting lines. The algebraic equivalent is performing the appropriate sum. Thus the operator ${M_{a}}^b$ applied to a state $T_{ab}$ is represented by figure \ref{fig:operatordiagram} a or by ${M_a}^c T_{cb}$. In the case of maximally entangled states, it is straightforward to show the identity
\begin{align}
M \otimes \mathbb{I} \ket{\Psi^+} = \mathbb{I} \otimes M^T \ket{\Psi^+}.
\end{align}
We will refer to this as the \emph{transpose rule} below.

Indices can be raised and lowered by contracting with maximally entangled pairs. In particular an operator ${M_{a}}^b$ can be mapped to a state by ${M_a}^b \rightarrow M_{ab}={M_{a}}^c \delta_{cb}$, and states to operators by $M_{ab} \rightarrow {M_a}^b = M_{ac}\delta^{cb}$. In the simplest case of a state with two indices this is also known as the Choi-Jamio{\l}kowski mapping \cite{jiang2013channel} between pure bipartite states and operators. More generally, we can raise and lower indices on objects with arbitrary numbers of indices by appropriate contractions with maximally entangled pairs. 

\begin{figure} 
\centering
\begin{tikzpicture}[scale=0.3]

\node at (0,0) {(a)};
\node at (14,0) {(b)};

\arrowedsegment{1}{15}{4}{12}
\arrowedsegment{7}{15}{4}{12}
\arrowedsegment{3.29}{5.29}{1}{3}
\arrowedsegment{4.71}{5.29}{7}{3}

\arrowedsegment{15}{15}{18}{12}
\arrowedsegment{17.29}{5.29}{15}{3}
\draw[thick,->] (18,12) to [out=45,in=180] (20,14);
\arrowedcurve{20}{14}{20}{4}{0}{0}
\draw[thick,->] (20,4) -- (19.9,4);
\draw[thick,->] (20,4) to [out=180,in=-45] (18,6);

\draw[fill=white] (4,6) circle [radius=1];
\draw[fill=white] (4,12) circle [radius=1];
\node at (4,6) {$T^*$};
\node at (4,12) {$T$};

\draw[fill=white] (18,6) circle [radius=1];
\draw[fill=white] (18,12) circle [radius=1];
\node at (18,6) {$T^*$};
\node at (18,12) {$T$};

\end{tikzpicture}
\caption{(a) Representation of the density matrix $\rho_{AB}$ shown in eq. \ref{eq:densitymatrixexample} in the graphical notation. (b) Representation of the reduced density matrix $\rho_A$ in the graphical notation.}
\label{fig:densitymatrix}
\centering
\end{figure}
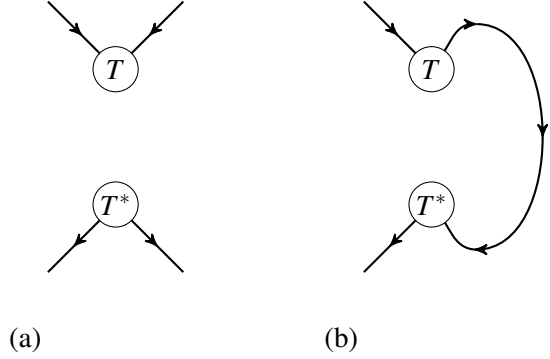

It is also possible to represent density matrices in a tensor network diagram. The density matrix corresponding to the state in eq. \ref{eq:examplestate}, given by 
\begin{align} \label{eq:densitymatrixexample}
\rho_{AB}=\sum_{ab}T_{ab} (T^*)^{cd} \ketbra{a}{c}_A\otimes\ketbra{b}{d}_B,
\end{align}
is drawn as the network shown in figure \ref{fig:densitymatrix} a. Contracting corresponding inward or outward indices in the diagram performs the partial trace. We show the diagram for $\rho_A$ in figure \ref{fig:densitymatrix} b.

In general the contraction of two properly normalized quantum states results in an unnormalized output, meaning it is necessary to add a final normalization factor after all the contractions have been performed. For this reason we frequently drop any normalization factors on our initial states, for instance writing $\ket{\Psi^+} = \sum \ket{m}\ket{m}$, since this has no effect on the final state after contraction and adding proper normalization. 

We recall an important bound on the von Neumann entropy of a subsystem of a tensor network state. Suppose we have a quantum state which is written
\begin{align}
\ket{\psi} &= \sum T_{i_1...i_n j_1...j_n}\ket{i_1}_{\bar{A}}...\ket{i_n}_{\bar{A}}\ket{j_1}_A... \ket{j_n}_A \nonumber \\
&= \sum_{IJ}T_{IJ} \ket{I}_{\bar{A}} \ket{J}_A,
\end{align}
where the capital indices stand in for a set of lower case indices, and we are interested in the entropy of the $A$ subsystem. If this state is described by a tensor network we can consider a cut $\gamma$ passing through the network and separating off the region $A$. Such a cut is specified by a path in the dual graph, which passes through a sequence of maximally entangled pairs. For each cut, there is a corresponding decomposition of the $T_{IJ}$ given by
\begin{align}
\ket{\psi} &= \sum_{IJKL}A_{IJ}\delta^{JK}{B}_{KL} \ket{I}_{\bar{A}} \ket{L}_A.
\end{align}
Now define states by $\sum_J B_{KL} \ket{L}_A = \ket{\hat{K}}_A$ and $\sum_I A_{IK} \ket{I} = \ket{\hat{K}}_{\bar{A}}$. This gives
\begin{align}
\ket{\psi} &= \sum_{K=1}^{|K|} \ket{\hat{K}}_{\bar{A}} \ket{\hat{K}}_A.
\end{align}
From this we have that $\rank(\rho_A) \leq |K|$. Since the von Neumann entropy is bounded above by the log of the rank, we have
\begin{align} \label{eq:rankbound}
S(\rho_A) \leq \log{\dim \gamma},
\end{align}
where we define the dimension of the cut by $\dim \gamma \equiv |K|$. Equality occurs when $\ket{\hat{K}}_A$ and $\ket{\hat{K}}_{\bar{A}}$ are orthonormal bases. 

\section{Maps defined from tensor networks} \label{sec:stateoncut}

Any cut which partitions the network defines two tensors, call them $C$ and $\bar{C}$, which contract to give the boundary state. That is we can write
\begin{align}
\ket{\psi}_{A\bar{A}} = \sum_{IJK} C_{IJ}\delta^{JK}\bar{C}_{KL}\ket{I_A}\ket{L_{\bar{A}}}.
\end{align}
This state can be formed by acting with the operators\footnote{If the boundary state is thought of as formed by contracting two states $\ket{C}=\sum C_{IJ}\ket{I}\ket{J}$ and $\ket{\bar{C}}=\sum \bar{C}_{KJ}\ket{K}\ket{J}$, these are just the operators that $\ket{C}$ and $\ket{\bar{C}}$ are brought to under the Choi-Jamio{\l}kowski mapping.}
\begin{align}
C = \sum_{IJ} C_{Ij_1...j_n} \ketbra{I_A}{j_{{B}_1}} ...\bra{j_{{B}_n}}, \nonumber \\
\bar{C} = \sum_{KJ} \bar{C}_{K j_1...j_n} \ketbra{K_{\bar{A}}}{j_{\bar{B}_1}} ...\bra{j_{\bar{B}_n}},
\end{align}
on a collection of maximally entangled pairs. That is
\begin{align} \label{eq:slickpsi}
\ket{\psi}_{A\bar{A}} = (C\otimes \bar{C}) \bigotimes_{i=1}^n \ket{\Psi^+}_{\bar{B}_i \bar{B}_i}.
\end{align}

In this picture $C$ and $\bar{C}$ act as maps from an interior Hilbert space onto the boundary. There is freedom in how we choose the operators $C$ and $\bar{C}$. For example, we could form the same state by contraction with a different choice of entangled states $\ket{\Psi_i}$ by writing
\begin{align} \label{eq:slickpsifree}
\ket{\psi}_{A\bar{A}} &= (C \Lambda^{-1} \otimes \bar{C} \bar{\Lambda}^{-1}) \bigotimes_{i=1}^n(\lambda^i\otimes\bar{\lambda}^i)\ket{\Psi^+}_{B_i \bar{B}_i} \nonumber \\
&= (C'\otimes \bar{C}') \bigotimes_{i=1}^n \ket{\Psi_i}_{B_i \bar{B}_i},
\end{align}
where $\Lambda = \bigotimes_i \lambda^i$ and $\bar{\Lambda}=\bigotimes_i \bar{\lambda}^i$. Additionally, we can move an operator $\lambda^i$ onto the $\bar{B}_i$ Hilbert space using the transpose rule. We could also choose operators $\Lambda$ and $\bar{\Lambda}$ which are not product. In this case we can no longer write $\bigotimes_i \ket{\Psi_i}$ for the state acted on by $C$ and $\bar{C}$. 

The general expression for $\ket{\psi}$ without placing assumptions on the form of the projecting state is
\begin{align} \label{eq:psifromgamma}
\ket{\psi}_{A\bar{A}} &= (C\otimes \bar{C}) \ket{\Psi}_{B \bar{B}},
\end{align}
where we label the Hilbert space $\bigotimes_i B_i$ by $B$ and $\bigotimes_i \bar{B}_i$ by $\bar{B}$. We give the graphical description of this expression in figure \ref{fig:cutmapprocedure} a. Eq. \ref{eq:psifromgamma} expresses the boundary state as the output of two operators acting on a state localized to the cut $\gamma$. This suggests a natural mapping to the cut, 
\begin{align} \label{eq:stateoncut}
\ket{\gamma}_{B \bar{B}}=C^\dagger C \otimes \bar{C}^\dagger \bar{C} \ket{\Psi}_{B \bar{B}}.
\end{align} 
This expression for the state on a cut is given graphically as figure \ref{fig:cutmapprocedure} b. 

\begin{figure}
\begin{subfigure}[b]{.5\textwidth}
  \centering
\begin{tikzpicture}[scale=0.3]

\draw[thick,postaction={on each segment={mid arrow}}]  (-3,-6)--(-3,-3.75);
\draw[thick,postaction={on each segment={mid arrow}}]  (0,-6)--(0,-3.75);
\draw[thick,postaction={on each segment={mid arrow}}] (3,-6)-- (3,-3.75);
\draw[thick,postaction={on each segment={mid arrow}}]  (-3,0)--(-3,-2.5);
\draw[thick,postaction={on each segment={mid arrow}}]  (0,0)--(0,-2.5);
\draw[thick,postaction={on each segment={mid arrow}}] (3,0)-- (3,-2.5);
\draw[thick,fill=white] (-5,-2) -- (5,-2) -- (5,-4) -- (-5,-4) -- (-5,-2);
\node at (0,-3) {$\ket{\Psi}$};

\foreach \x in {0,...,3}
	\draw[thick,postaction={on each segment={mid arrow}}] (30+40*\x:7) -- (30+40*\x:5);

\begin{scope}[shift={(0,-6)}]
\foreach \x in {0,...,3}
	\draw[thick,postaction={on each segment={mid arrow}}] (30+40*\x:-7) -- (30+40*\x:-5);
\end{scope} 

\draw[thick,fill=white] (-5,0) to [out=90,in=180] (0,5) to [out=0,in=90] (5,0) -- (-5,0);
\node at (0,2.25) {\Huge{$\bar{C}$}};

\draw[thick,fill=white] (-5,-6) to [out=-90,in=180] (0,-11) to [out=0,in=-90] (5,-6) -- (-5,-6);
\node at (0,-8.25) {\Huge{$C$}};

\node at (0,-20) { };

\end{tikzpicture}
\caption{}
\end{subfigure}
\begin{subfigure}[b]{.5\textwidth}
  \centering
\begin{tikzpicture}[scale=0.3]

\draw[thick,->] (0,12) to [out=-30,in=90] (5,6);
\draw[thick] (5,6) to [out=-90,in=30] (0,0);
\draw[thick,->] (0,12) to [out=-70,in=90] (2,6);
\draw[thick] (2,6) to [out=-90,in=70] (0,0);
\draw[thick,->] (0,12) to [out=-110,in=90] (-2,6);
\draw[thick] (-2,6) to [out=-90,in=110] (0,0);
\draw[thick,->] (0,12) to [out=-150,in=90] (-5,6);
\draw[thick] (-5,6) to [out=-90,in=150] (0,0); 

\draw[thick,->] (0,-18) to [out=30,in=-90] (5,-12);
\draw[thick] (5,-12) to [out=90,in=-30] (0,-6);
\draw[thick,->] (0,-18) to [out=70,in=-90] (2,-12);
\draw[thick] (2,-12) to [out=90,in=-70] (0,-6);
\draw[thick,->] (0,-18) to [out=110,in=-90] (-2,-12);
\draw[thick] (-2,-12) to [out=90,in=-110] (0,-6);
\draw[thick,->] (0,-18) to [out=150,in=-90] (-5,-12);
\draw[thick] (-5,-12) to [out=90,in=-150] (0,-6);

\draw[thick,postaction={on each segment={mid arrow}}] (0,14) -- (0,12);
\draw[thick,postaction={on each segment={mid arrow}}] (3,14) -- (3,12);
\draw[thick,postaction={on each segment={mid arrow}}] (-3,14) -- (-3,12);

\draw[thick,postaction={on each segment={mid arrow}}] (0,-20) -- (0,-18);
\draw[thick,postaction={on each segment={mid arrow}}] (3,-20) -- (3,-18);
\draw[thick,postaction={on each segment={mid arrow}}] (-3,-20) -- (-3,-18);

\draw[thick,postaction={on each segment={mid arrow}}]  (-3,-6)--(-3,-3.75);
\draw[thick,postaction={on each segment={mid arrow}}]  (0,-6)--(0,-3.75);
\draw[thick,postaction={on each segment={mid arrow}}] (3,-6)-- (3,-3.75);
\draw[thick,postaction={on each segment={mid arrow}}]  (-3,0)--(-3,-2.5);
\draw[thick,postaction={on each segment={mid arrow}}]  (0,0)--(0,-2.5);
\draw[thick,postaction={on each segment={mid arrow}}] (3,0)-- (3,-2.5);
\draw[thick,fill=white] (-5,-2) -- (5,-2) -- (5,-4) -- (-5,-4) -- (-5,-2);
\node at (0,-3) {$\ket{\Psi}$};

\draw[thick,fill=white] (-5,0) to [out=90,in=180] (0,5) to [out=0,in=90] (5,0) -- (-5,0);
\node at (0,2.25) {\Huge{$\bar{C}$}};

\draw[thick,fill=white] (-5,-6) to [out=-90,in=180] (0,-11) to [out=0,in=-90] (5,-6) -- (-5,-6);
\node at (0,-8.25) {\Huge{$C$}};

\draw[thick,fill=white] (-5,12) to [out=-90,in=180] (0,7) to [out=0,in=-90] (5,12) -- (-5,12);
\node at (0,10) {\Huge{$\bar{C}^\dagger$}};

\draw[thick,fill=white] (-5,-18) to [out=90,in=180] (0,-13) to [out=0,in=90] (5,-18) -- (-5,-18);
\node at (0,-15.75) {\Huge{$C^\dagger$}};
\end{tikzpicture}
\caption{}
\end{subfigure}
\caption{(a) Graphical description of \ref{eq:psifromgamma}, which gives a tensor network state in terms of the two block tensors $C$ and $\bar{C}$ defined by a cut $\gamma$. (b) Graphical description of eq \ref{eq:stateoncut}, which computes the state on a cut $\gamma$.}
\label{fig:cutmapprocedure}
\end{figure}
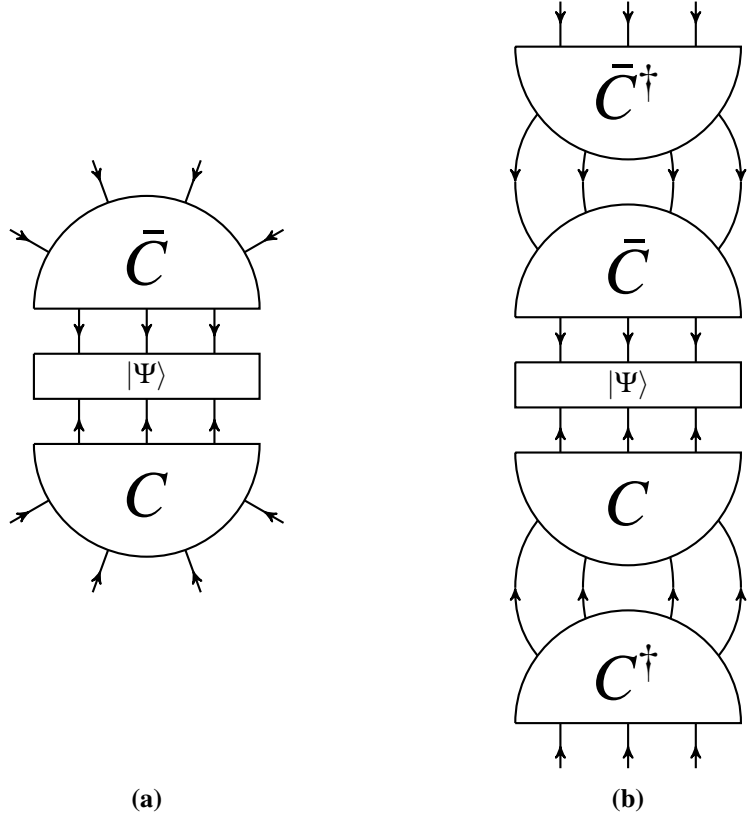   

In addition to thinking of tensor networks as maps in the sense described above, earlier literature \cite{qi2013exact, pastawski2015holographic, hayden2016holographic, yang2016bidirectional} also considers tensor networks as maps from a set of ``bulk'' uncontracted legs to the boundary legs. To build a network of this type, we place a tensor with $n+1$ legs on a vertex with $n$ edges. Contraction is performed according to the pattern of the graph as before, but now there is an extra leg associated with each vertex that remains uncontracted. It is these extra uncontracted legs that we refer to as bulk legs. A cut $\gamma$ through the network now defines a map from the cut legs plus the bulk legs to the boundary. In the presence of bulk legs we can think of the full tensor network as defining a state on the bulk and boundary legs, or as defining a mapping from bulk to boundary legs. We will have use of both perspectives.

\section{MERA and entanglement renormalization}

Outside of the holographic context the most prominent applications of tensor networks are in condensed matter theory. Tensor networks have proven central to the development of efficient numerical approximation of ground states as well as to real space renormalization techniques. These applications highlight some key features of tensor networks, which has provided and may continue to provide insight into how tensor networks may be usefully applied in holography. For this reason we outline some of these applications here. 

An early tensor network application relates to the density matrix renormalization group (DMRG)\footnote{Although similar ideas can be used to do a real space renormalization procedure in 1D, the ``DMRG'' algorithm described here is not a renormalization procedure and the naming is unfortunate.} \cite{schollwock2011density}. This is a numerical technique for determining the ground state of 1D systems. Very roughly, the procedure is as follows. We are given a Hamiltonian $H$ which acts on a lattice with some finite number of sites. After generating some initial ansatz $\ket{\Psi}$ for the ground state, we iteratively improve the accuracy of this ansatz by a repeated cutting and varying procedure. We first imagine splitting $\ket{\Psi}$ into two Hilbert spaces and writing it in the Schmidt decomposition, 
\begin{align}
\ket{\Psi} = \Sigma_{i}\Psi_i \ket{\psi_i}_A\ket{\psi_i}_B.
\end{align}
We then optimize the choice of $\Psi_i$ by minimizing $\bra{\Psi}H\ket{\Psi}$. Next, we move the cut that divides $A$ and $B$ one step over, and repeat. We continue this until reaching one end of the lattice, then turn around and sweep through the sites in the other direction. This procedure continues until the state $\ket{\Psi}$ is left unchanged by the sweeping procedure. 

The DMRG proved highly successful at describing ground states of gapped Hamiltonians. Central to this success is the DMRGs tracking of entanglement. At each step, the algorithm varies the Schmidt coefficients between two subsystems $A$ and $B$ in order to best approximate the ground state. It was later realized that DMRG could be usefully understood in terms of a tensor network representation termed the matrix product state (MPS), shown in figure \ref{fig:mpsstate}. The MPS contains links between nearest neighbour sites, accounting for its suitability in describing states with nearest neighbour interactions.

\begin{figure}
\begin{center}
\begin{tikzpicture}[scale=1]

\draw[thick,fill=black] (0,0) circle [radius=0.1];
\draw[thick,fill=black] (1,0) circle [radius=0.1];
\draw[thick,fill=black] (2,0) circle [radius=0.1];
\draw[thick,fill=black] (3,0) circle [radius=0.1];
\draw[thick,fill=black] (4,0) circle [radius=0.1];
\draw[thick,fill=black] (5,0) circle [radius=0.1];
\draw[thick,fill=black] (6,0) circle [radius=0.1];
\draw[thick,fill=black] (7,0) circle [radius=0.1];
\draw[thick,fill=black] (8,0) circle [radius=0.1];
\draw[thick,fill=black] (9,0) circle [radius=0.1];

\draw[thick] (0,0) -- (9,0);

\draw[thick] (0,0) -- (0,-1);
\draw[thick] (1,0) -- (1,-1);
\draw[thick] (2,0) -- (2,-1);
\draw[thick] (3,0) -- (3,-1);
\draw[thick] (4,0) -- (4,-1);
\draw[thick] (5,0) -- (5,-1);
\draw[thick] (6,0) -- (6,-1);
\draw[thick] (7,0) -- (7,-1);
\draw[thick] (8,0) -- (8,-1);
\draw[thick] (9,0) -- (9,-1);

\end{tikzpicture}
\end{center}
\caption{Illustration of the tensor network which prepares a matrix product state. Matrix product states are the output of the DMRG algorithm.}
\label{fig:mpsstate}
\end{figure}
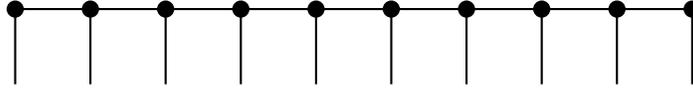

At a critical point correlation lengths diverge and we expect long range entanglement to be present in ground states. To describe a critical system efficiently we should expect to need a tensor network with a graph structure with links between sites at all distances. The MERA network \cite{vidal2008class,vidal2007entanglement} \footnote{MERA stands for Multiscale Entanglement Renormalization Ansatz, which is again sometimes a misnomer due to the wide range of uses and perspectives taken with regard to MERA.}  provides just such a representation. We illustrate the network in figure \ref{fig:MERA}.

\begin{figure}
\begin{center}
\includegraphics[scale=1]{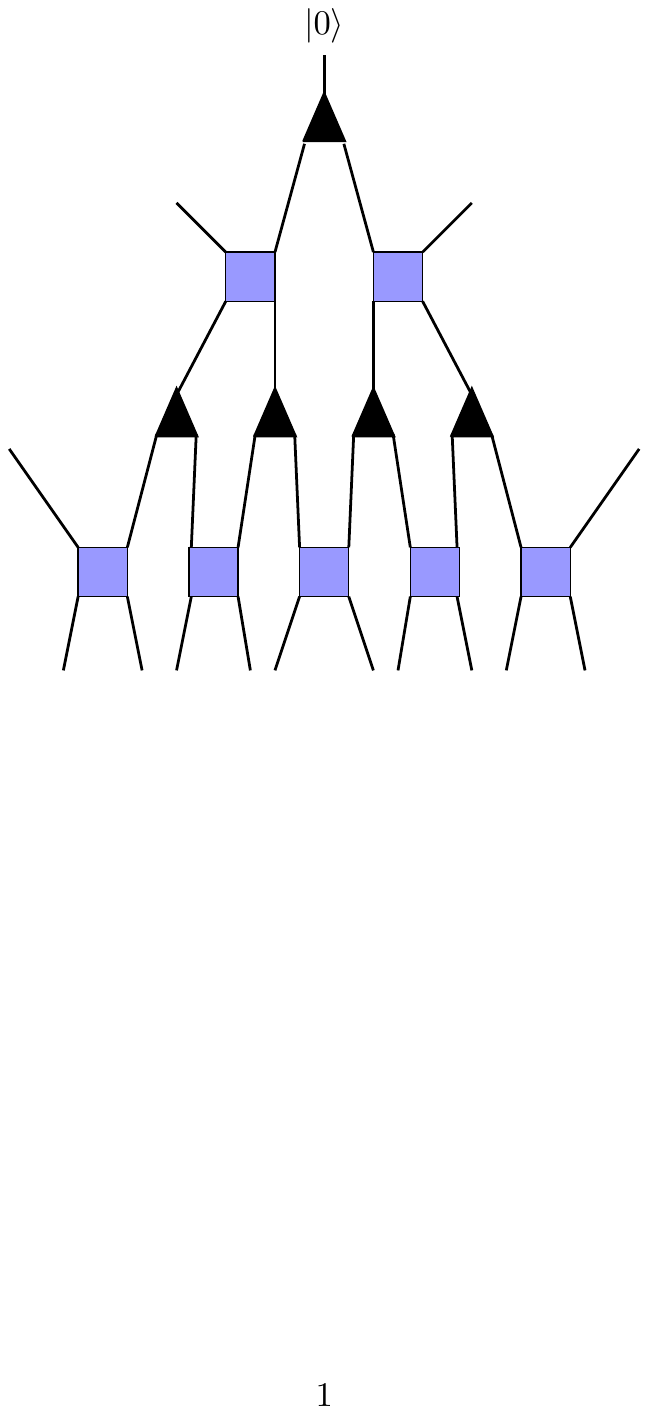}
\end{center}
\caption{Illustration of the MERA network, thought of as a quantum circuit which prepares a state. The blue squares represent unitary matrices and black triangles represent isometries. Free legs attached to blue squares are wrapped to the opposite side of the figure.}
\label{fig:MERA}
\end{figure}

At the uppermost layer of figure \ref{fig:MERA} a qubit is fed into a three index tensor, which is required to be an isometry when thought of as a map from its upper index to its two lower indices. At the next layer the outputs from the previous layer are fed into two unitary gates. The subsequent layer is then fed into a layer of isometries, and so on. The purpose of the three legged tensors is to increase the total number of legs, eventually building a large system. These legs also add entanglement, but some nearby legs will be more entangled than others if a network of only these tensors are used\footnote{The reader may draw such a network and convince themselves that some neighbouring legs are connected at the first level of the network, while others are connected many layers up.} . The four index tensors add local entanglement and ensure a translationally invariant state is prepared. In a typical application of MERA the choice of unitary and isometry is optimized to minimize the energy of the prepared state.

The MERA network may also be thought of as acting on some already known state on $N$ sites, with each layer of the MERA reducing the number of sites to $N/2$. From this perspective MERA acts as a real space renormalization procedure. The state output after one layer of MERA is a coarse grained version of the earlier layer, with the unitary-isometry structure chosen so that entanglement at the decimated length scale is removed while larger scale entanglement is preserved. The Hamiltonian also flows under this transformation, and critical points are those where the Hamiltonian is invariant under this transformation. A useful review of MERA is provided in Vidal \cite{vidal2009entanglement}.

\chapter{Tensor networks and holography}\label{sec:tnandholography}

\section{Entanglement and geometry}

The Ryu-Takayanagi formula, aside from being a powerful computational tool, reveals a startling connection between entanglement in holographic CFTs and their bulk dual gravity theories. This was emphasized early on by Van Raamsdonk \cite{van2010building} who considered the thermofield double state,
\begin{align}
\ket{\Psi} = \sum_i e^{-\beta E_i/2} \ket{E_i}_A\otimes \ket{E_i}_B,
\end{align}
where $A$ and $B$ are the Hilbert spaces for two CFTs. Van Raamsdonk recalled that this state had been understood to correspond to a wormhole geometry in the bulk \cite{maldacena2003eternal}. This is already surprising as the A and B CFTs are non-interacting and their only relation is that they have been put in this entangled state. It seems that the entanglement between the A and B subsystems is somehow responsible for the bulk wormhole connection between the spacetimes.

\begin{figure}
\begin{center}
\includegraphics[scale=1]{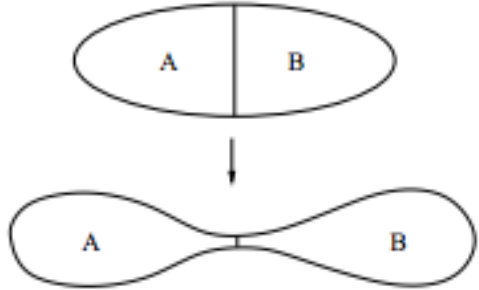}
\end{center}
\caption{As the entanglement between the A and B regions in the CFT is removed, the bulk geometry pinches and then is pulled apart.}
\end{figure}

To get a more quantitative handle on this one can consider tuning the parameter $\beta$ to decrease the entanglement between $A$ and $B$. We can measure the entanglement using the Von Neumann entropy of $A$, and relate this to the area of the wormhole neck via the RT formula. When this is done it is found that as the entanglement decreases the wormhole neck narrows before finally pinching off as the entanglement vanishes, leaving two disconnected spacetimes. Following on this perspective of \emph{entanglement builds geometry} various authors have pursued a line of work which takes as starting point the RT formula and tries to extract gravitational physics from properties of entanglement \cite{lashkari2014gravitational,swingle2014universality}. This program has successfully recovered Einsteins equations to first and second order \cite{faulkner2014gravitation,faulkner2017nonlinear}, and proven several positive energy theorems \cite{lashkari2016gravitational}.  

Tensor networks realize the entanglement-geometry connection in an immediate way. Consider for example a quantum state which contains two unentangled subsystems. It will be possible to prepare such a quantum state using two disconnected networks, and interpreting the networks graph as a discretized geometry we immediately find that the spacetime dual to the state contains two disconnected regions, just as in the AdS/CFT example. More quantitatively, the entropy bound \ref{eq:rankbound} which reads 
\begin{align}\label{eq:repeatbound}
S(A) \leq \log \dim \gamma
\end{align}
for tensor networks tells us that given some amount of entanglement between two subsystems any cut through the network which divides the two regions must contain some minimal number of legs. 

It is interesting that in the tensor network picture entanglement only specifies a minimal number of legs required. This implies the graph geometry is not totally specified by the entanglement properties of the quantum state it prepares, rather, there are many networks with potentially distinct graph structures which prepare the same state. This has lead to the introduction of various additional requirements on the tensor network in cases where one would like to interpret it geometrically. Typically some requirement is added which ensures the bound \ref{eq:repeatbound} is saturated, allowing entanglement entropies to be determined from the graph structure alone. In section \ref{sec:happynetworksreview} and section \ref{sec:randomreview} we will see two examples of this.

\section{Early efforts and constraints on holographic tensor networks}

Tensor network states display a connection between their graph geometry and entanglement properties, and so it is natural to wonder if tensor networks can be used to build toy models of AdS/CFT, or even to define more general notions of holography than AdS/CFT. This idea was first pointed out and pursued by Swingle \cite{swingle2012constructing} and has been an area of active interest since. 

Early efforts at establishing contact between tensor networks and AdS/CFT focused on the MERA network. This is natural as MERA had already proven its usefulness in approximating states in a CFT. Further, Swingle noted the minimal surfaces in MERA are similar to the AdS minimal surfaces, and bound \ref{eq:rankbound} on the entropy then connects these minimal surfaces to boundary entanglement. Qi developed this proposal by including a set of bulk legs in the MERA network and considering the tensor network as a map between bulk and boundary Hilbert spaces \cite{qi2013exact}. 

Another perspective taken early on was to look for general conditions constraining all holographic tensor networks, rather than to construct a specific model. Such arguments lead Bao et al. \cite{bao2015consistency} to the realization that MERA could only hope to describe AdS geometry at lengths greater than the AdS radius. It was also found that there was no dimension for the bulk Hilbert space which would satisfy both the RT formula and the Bousso bound\footnote{The Bousso bound is a bound on the entropy of a certain light-sheet which generalizes the $S \leq A/4G$ bound coming from avoiding black hole formation.} for a MERA network. 

There were two developments which side stepped these limitations. First, some authors pursued other choices of tensor network, beginning with the networks built from perfect tensors \cite{pastawski2015holographic}. These networks have the advantage of realizing the AdS geometry in a more direct way - minimal cuts in perfect networks realize the RT formula exactly. These also have the advantage of being translationally symmetric and isotropic, whereas a MERA network has a directionality. However, they have the significant disadvantage that the boundary state they prepare does not approximate a CFT state. Nonetheless these networks realize both the RT formula and the error correction properties of AdS/CFT and have consequently proven to be interesting toy models. We discuss them at some length beginning in section \ref{sec:happynetworksreview}.

The second direction which side steps the constraints of Bao et al. \cite{bao2015consistency} is to reinterpret the MERA network geometry not as the spatial geometry of AdS, but rather as de Sitter. This proposal first appeared in Beny \cite{beny2013causal}, where the causal structure of the MERA was identified as a discrete version of that appearing in de Sitter space. 

A second development by Czech et al. \cite{czech2015integral, czech2016tensor} also associated MERA with de Sitter space, but beginning with a very different starting point. There, the authors considered the mathematical space known as \emph{kinematic space}. For any space with a measure, we can define the associated kinematic space by parameterizing the set of all geodesics and can equip this new space with a natural measure. Further, for the case of AdS$_3$ it was possible to determine a metric for the new space and identify it as dS$_2$. This new kinematic space turns out to encode entanglement properties of the CFT is a direct way. In particular, the volume of a region in kinematic space is computed as the conditional mutual information of a set of three boundary regions. The interpretation of the MERA network as geometric is much more natural if one identifies the MERA as approximating kinematic space. In fact, conditional mutual information in MERA is computed approximately by counting vertices in a certain region, similar to the volume calculation in kinematic space. Further, the causal structure identified earlier in the MERA network can be matched to the causal structure of kinematic space. 

We will focus on networks which approximate spatial slices of AdS in this thesis. However, our lack of discussion of MERA and kinematic space should not be taken as representative of the importance of this work. Both the real space and kinematic space networks are interesting. Typically, real space models do not effectively approximate CFT ground states but can realize the error correction property, while kinematic space models do the reverse. A possible exception to this is the recent hyper-invariant models \cite{evenbly2017hyper}.

\section{The subregion isometry property}\label{sec:subregionisometry}

There are various aspect of AdS/CFT that we might want to capture in a toy model, but most tensor network literature has focused on two: the entanglement-geometry connection, as made precise in the Ryu-Takayanagi formula, and the error correction property. In fact, we can show fairly easily that both of these properties can be realized in a tensor network given that the tensor network has what we will call the subregion isometry property. 

\begin{figure}
\begin{center}
\includegraphics[scale=1]{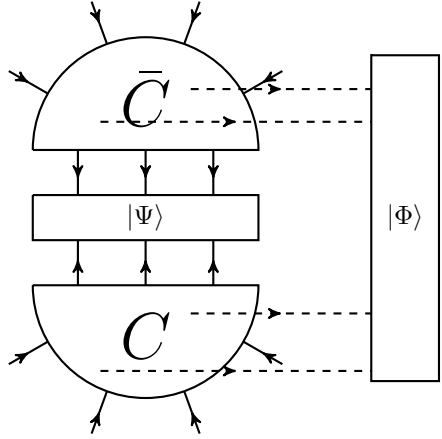}
\end{center}
\caption{A tensor network which includes bulk legs. The state on a cut $\ket{\Psi}_{B\bar{B}}$ is associated to any path through the dual graph $\gamma$.}
\label{fig:mapwithbulk}
\end{figure}

We illustrate a tensor network schematically in figure \ref{fig:mapwithbulk}. A cut $\gamma$ has been chosen which divides the network into two regions, labelled $C,\bar{C}$ and defines a state on the cut $\ket{\Psi}_{B\bar{B}}$ as discussed in section \ref{sec:stateoncut}. There are bulk legs associated with the regions $C$ and $\bar{C}$, and projected into these is the bulk state $\ket{\Phi}_{W(A)W(\bar{A})}$. Given a network with these basic components we define the subregion isometry property as follows:
\begin{definition}
A tensor network is said to have the \textbf{subregion isometry property} when, given a cut of minimal length $\gamma_A$ anchored on a boundary region $A$, the tensor network defines maps $C:H_{W(A)B}\rightarrow A$ and $\bar{C}:H_{W(\bar{A})\bar{B}}\rightarrow H_{\bar{A}}$ which are both isometries. 
\end{definition}

Much of the tensor network literature consists of methods for constructing networks which have this property. We discuss HaPPY networks and random tensor networks from this perspective in sections \ref{sec:happynetworksreview} and \ref{sec:randomreview}, respectively. For now, let us assume it is possible to construct such networks and investigate the consequences. Suppose then that the network shown schematically in figure \ref{fig:mapwithbulk} has the subregion isometry property. Then algebraically the state prepared is
\begin{align}
\ket{\psi} = C\otimes \bar{C} \left(\ket{\Psi}_{B\bar{B}}\otimes \ket{\Phi}_{W(A)W(\bar{A})}\right).
\end{align}
We will be interested in the reduced density matrix of a subregion $A$,
\begin{align}\label{eq:densitymatrixwithbulk}
\rho_A &= \tr_{\bar{A}} (C\otimes \bar{C} \left(\ketbra{\Psi}{\Psi}_{B\bar{B}}\otimes \ketbra{\Phi}{\Phi}_{W(A)W(\bar{A})}\right) C^\dagger \otimes \bar{C}^\dagger).
\end{align}
If we now use the cyclic property of the trace and that $\bar{C}^\dagger C = \mathbb{I}$ we find
\begin{align}\label{eq:simplerdensitymatrixwithbulk}
\rho_A = C \left[\tr_{\bar{B}}(\ketbra{\Psi}{\Psi}_{B\bar{B}})\otimes \tr_{W(\bar{A})}(\ketbra{\Phi}{\Phi}_{W(A)W(\bar{A})})\right] C^\dagger.
\end{align}

From here we can straightforwardly see the Ryu-Takayanagi property. Since the von Neumann entropy is unchanged under conjugation by an isometry, we have
\begin{align}
S(\rho_A)  &= S(\tr_{\bar{B}}(\ketbra{\Psi}{\Psi}_{B\bar{B}}) + S(\rho_{W(A)}) \nonumber \\
&= |\gamma_A|\log D + S(\rho_{W(A)}).
\end{align}
where $|\gamma_A|$ represents the number of legs cut by $\gamma$, and $D$ is the dimension of a single leg. Typically we identify $|\gamma_A|\log D$ as the length of the cut, so that 
\begin{align}
S(\rho_A) = \underset{\gamma_A}{\text{min}} \, L(\gamma_A) + S(\rho_{W(A)}).
\end{align}
Which is exactly the Ryu-Takayanagi formula, including the bulk entropy term.

The subregion isometry property also leads immediately to the error correction property. Recall that the error correction property can be stated precisely by the requirement \ref{eq:errorcorrectionproperty}, which requires there be a boundary operator $\mathcal{O}_A$ living on the region $A$ for every bulk operator $\mathcal{O}_{W(A)}$ living in the entanglement wedge and such that
\begin{align}
\tr (\rho_A \mathcal{O}_A) = \tr (\rho_{W(A)} \mathcal{O}_{W(A)}).
\end{align}
To construct the operator $\mathcal{O}_A$ from $\mathcal{O}_{W(A)}$ in a tensor network, we define
\begin{align}
\mathcal{O}_A \equiv C (\mathcal{O}_{W(A)}\otimes \mathbb{I}_B )C^\dagger.
\end{align}
Then it is a simple calculation to check \ref{eq:errorcorrectionproperty},
\begin{align}
\tr(\rho_A \mathcal{O}_A) = \tr( \rho_A C (\mathcal{O}_{W(A)}\otimes \mathbb{I}_B) C^\dagger ) = \tr( C^\dagger \rho_A C (\mathcal{O}_{W(A)}\otimes \mathbb{I}_B))
\end{align}
Now from \ref{eq:simplerdensitymatrixwithbulk} we have that
\begin{align}
C^\dagger \rho_A C = \rho_{W(A)}\otimes \rho_B,
\end{align}
which leads to
\begin{align}
\tr (\rho_A \mathcal{O}_A) = \tr ((\rho_W(A) \otimes \rho_B) (\mathcal{O}_{W(A)}\otimes \mathbb{I}_B) ) = \tr(\rho_{W(A)}\mathcal{O}_{W(A)}).
\end{align}
We should also note that expression \ref{eq:simplerdensitymatrixwithbulk} has already made use of $\bar{C}$ being an isometry, so both $C$ and $\bar{C}$ being isometries is used in the proof of both the error correction and RT formulas.

\chapter{Examples of real space holographic tensor networks}

\section{HaPPY networks}\label{sec:happynetworksreview}

Due to the shortcomings of the MERA in describing AdS, some authors began pursuing other classes of tensor network as possible toy models of AdS/CFT. HaPPY networks \cite{pastawski2015holographic} are one model which has been introduced; they have the property that cuts which cross a minimal number of legs saturate the bound \ref{eq:rankbound}. This gives HaPPY networks a precise connection between entanglement and graph geometry as they satisfy the RT formula exactly. We outline some facts about HaPPY networks in this section.

The basic building block of a HaPPY network is a perfect tensor. We remind the reader of the definition of a perfect tensor below. 
\begin{definition}
A \textbf{perfect tensor} is a tensor $T_{a_1 a_2...a_{2n}}$ with an even number of indices and having the property that
\begin{align} \label{eq:perfectioncondition}
T_{a_1...a_n a_{n+1}...a_{2n}} (T^*)^{b_1...b_n a_{n+1}...a_{2n}} = \delta_{a_1}^{b_1}...\delta_{a_n}^{b_n},
\end{align}
where the $a_{n+1}...a_{2n}$ can be chosen to be any of the $2n$ legs of the tensor.
\end{definition}

We can also raise and lower legs on the left side of \ref{eq:perfectioncondition}, giving
\begin{align}\label{eq:unitarycondition}
{T_{a_1...a_n}}^{a_{n+1}...a_{2n}} {(T^*)^{b_1...b_n}}_{a_{n+1}...a_{2n}} = \delta_{a_1}^{b_1}...\delta_{a_n}^{b_n}.
\end{align}
This shows we can think of the perfection condition as the statement that the tensor defines a unitary transformation from any set of $n$ legs to the complement. It follows by contracting indices on both sides of \ref{eq:unitarycondition} that perfect tensors define isometries from any subset of legs of size $k<n$ to the complement. We illustrate the perfection condition in figure \ref{fig:perfectcondition}. 

To explicitly construct a perfect tensor, we can begin by thinking about the three qutrit error correcting code discussed in section \ref{sec:qit}. Recall the error correcting code consisted of a code subspace $H_{code}$ spanned by three states $\ket{0_L},\ket{1_L}, \ket{2_L}$ which were written explicitly in equation \ref{eq:codesubspace}. To construct a perfect tensor, we take the state
\begin{align}\label{eq:perfectqutrit}
\ket{\Psi}_{b123} = \ket{0}_b\otimes \ket{0_L}_{123} + \ket{1}_b\otimes \ket{1_L}_{123} + \ket{2}_b\otimes \ket{2_L}_{123} 
\end{align}
which one can check explicitly defines a perfect tensor. The four index tensor defined by \ref{eq:perfectqutrit} actually acts as the encoding map $\ket{i}\rightarrow \ket{i_L}$. The mapping is given by
\begin{align}
_{b}\braket{i}{\Psi}_{b123} = \ket{i_L}_{123}.
\end{align}

A useful operation involving perfect tensors is operator pushing. Suppose we have a perfect tensor $T$ and an operator $\mathcal{O}$ which acts on three legs. Then we can rewrite the tensor $\mathcal{O}T$ as $T \mathcal{O}'$ by defining $\mathcal{O}'=T^\dagger \mathcal{O} T$. We illustrate this in figure \ref{fig:operatorpushing}. An operator acting on a single leg of a $2n$ leg perfect tensor can be pushed through to any $n$ legs, but in general the operator $\mathcal{O}'$ will not act as a tensor product across those legs.  

To construct a HaPPY network, perfect tensors are placed on the vertices of a graph with a non-positive curvature condition\footnote{By non-positive curvature it is meant that distance (measured in number of legs cut) between points in the dual graph has no maximum away from the boundary.}. Reference \cite{pastawski2015holographic} which introduced HaPPY networks does not keep track of the distinction between upper and lower indices in their construction, so to translate their construction to the language used here we must consider a maximally entangled state being placed along every edge of this non-positively curved graph. This done, we may perform the contraction, leaving a boundary state whose entanglement entropies saturate \ref{eq:rankbound}.

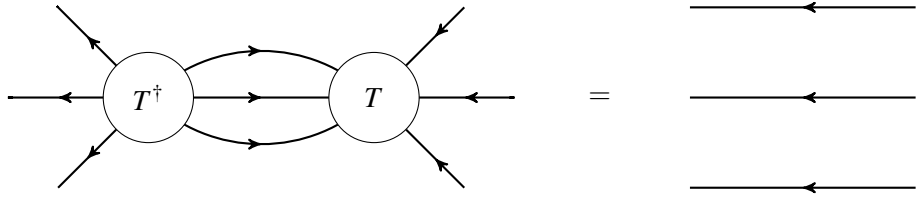
\begin{figure}
\begin{center}
\begin{tikzpicture}[scale=0.6]
\draw[thick, postaction={on each segment={mid arrow}}] (0,0) to [out=0,in=180] (5,0);
\draw[thick, postaction={on each segment={mid arrow}}] (0,0) to [out=45,in=135] (5,0);
\draw[thick, postaction={on each segment={mid arrow}}] (0,0) to [out=-45,in=-135] (5,0);
\draw[thick, postaction={on each segment={mid arrow}}] (-1,0)  to [out=0,in=180] (-3,0);
\draw[thick, postaction={on each segment={mid arrow}}] (-0.71,0.71)  to [out=-45,in=135] (-2,2);
\draw[thick, postaction={on each segment={mid arrow}}] (-0.71,-0.71) to [out=45,in=-135] (-2,-2);
\draw[thick, postaction={on each segment={mid arrow}}] (8,0) to [out=0,in=180] (6,0);
\draw[thick, postaction={on each segment={mid arrow}}] (7,2) to [out=45,in=-135] (5+0.71,0.71);
\draw[thick, postaction={on each segment={mid arrow}}] (7,-2) to [out=-45,in=135] (5+0.71,-0.71);
\draw[fill=white] (0,0) circle [radius=1];
\node at (0,0) {$T^\dagger$};
\draw[fill=white] (5,0) circle [radius=1];
\node at (5,0) {$T$};

\node at (10,0) {$=$};
\draw[thick, postaction={on each segment={mid arrow}}] (17,0)--(12,0);
\draw[thick, postaction={on each segment={mid arrow}}] (17,2)--(12,2);
\draw[thick, postaction={on each segment={mid arrow}}] (17,-2)--(12,-2);
\end{tikzpicture}
\end{center}
\caption{Illustration of the defining condition for perfect tensors. The same equality must hold when any subset consisting of half the legs is contracted.}
\label{fig:perfectcondition}
\end{figure}

In the case of HaPPY networks the length of curves through the graph is defined by
\begin{align}\label{eq:graphlength}
L_G(\gamma) = \log (\dim \gamma).
\end{align}
Herein we will refer to this as the graph length. In section \ref{sec:lengthdefined} we will discuss an alternative notion of length in the network.

\begin{figure} 
\begin{center}
\begin{tikzpicture}[scale=1]

\arrowedcurve{0.5}{0}{1.5}{0}{0}{180}
\arrowedcurve{0.5}{0.3}{1.5}{0.3}{0}{180}
\arrowedcurve{0.5}{-0.3}{1.5}{-0.3}{0}{180}
\arrowedcurve{-1.5}{0}{-0.5}{0}{0}{180}
\arrowedcurve{-1.5}{-0.3}{-0.5}{-0.3}{0}{180}
\arrowedcurve{-1.5}{0.3}{-0.5}{0.3}{0}{180}
\arrowedcurve{3.5}{0}{2.5}{0}{0}{180}
\arrowedcurve{3.5}{-0.3}{2.5}{-0.3}{0}{180}
\arrowedcurve{3.5}{0.3}{2.5}{0.3}{0}{180}

\node at (4.85,0) {$=$};

\arrowedcurve{6}{0}{7}{0}{0}{180}
\arrowedcurve{6}{0.3}{7}{0.3}{0}{180}
\arrowedcurve{6}{-0.3}{7}{-0.3}{0}{180}
\arrowedcurve{9}{0}{8}{0}{0}{180}
\arrowedcurve{9}{-0.3}{8}{-0.3}{0}{180}
\arrowedcurve{9}{0.3}{8}{0.3}{0}{180}
\arrowedcurve{11}{0}{10}{0}{0}{180}
\arrowedcurve{11}{-0.3}{10}{-0.3}{0}{180}
\arrowedcurve{11}{0.3}{10}{0.3}{0}{180}

\draw[thick,fill=white] (-0.5,-0.5)--(+0.5,-0.5)--(+0.5,+0.5)--(-0.5,+0.5)--(-0.5,-0.5);
\draw[thick,fill=white] (2-0.5,-0.5)--(2+0.5,-0.5)--(2+0.5,0.5)--(2-0.5,0.5)--(2-0.5,-0.5);
\node at (0,0) {$\mathcal{O}$};
\node at (2,0) {T};

\draw[thick,fill=white] (7,-0.5)--(8,-0.5)--(8,0.5)--(7,0.5)--(7,-0.5);
\draw[thick,fill=white] (9.5-0.5,-0.5)--(9.5+0.5,-0.5)--(9.5+0.5,+0.5)--(9.5-0.5,+0.5)--(9.5-0.5,-0.5);
\node at (9.5,0) {$\mathcal{O}'$};
\node at (7.5,0) {$T$};

\end{tikzpicture}
\end{center}
\caption{Illustration of the operator pushing operation. An operator $\mathcal{O}$ acting on a subset of size $n$ of a perfect tensor with $2n$ legs is equivalent to an operator $\mathcal{O}'=T^\dagger \mathcal{O} T$ acting on the complement.}
\label{fig:operatorpushing}
\end{figure}
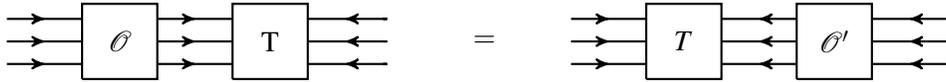

The key result regarding HaPPY networks which gives them a precise entanglement-geometry connection is
\begin{align} \label{eq:RT}
S(A) = \underset{\gamma^A}{\min} \, L_G(\gamma^A),
\end{align}
where $A$ is a single boundary interval and the minimization is taken over cuts $\gamma_A$ enclosing $A$. To show this, the authors show that both sides of a minimal cut can be interpreted as a unitary circuit from the cut legs and a subset of the boundary legs to the remainder of the boundary legs. We restate this result in a slightly changed language as follows.

\begin{theorem} \label{thm:happy}
In a HaPPY network, a cut which is anchored on a boundary interval $A$ and crosses a minimal number of legs defines a map from the cut legs to the interval $A$ which is an isometry.
\end{theorem}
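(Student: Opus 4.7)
The plan is to prove this by a greedy absorption argument, following the approach of Pastawski et al.~\cite{pastawski2015holographic}. The map $C:H_{\gamma_A}\to H_A$ in question is the contraction of all perfect tensors lying on the $A$-side of $\gamma_A$; call these tensors $T_1,\ldots,T_m$. I would build up $C$ one tensor at a time, tracking a ``moving cut'' $\gamma^{(j)}$ with $\gamma^{(0)} = \gamma_A$ and $\gamma^{(m)} = A$, and showing inductively that the partial contraction after $j$ absorptions is an isometry from $H_{\gamma_A}$ into the currently exposed legs.

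The key local step is the perfect-tensor property. When I absorb a $2n$-leg perfect tensor $T$ with exactly $k$ of its legs contracted into the existing partial map (so that the remaining $2n-k$ legs become new output legs), $T$ acts as a linear map $H^{\otimes k}\to H^{\otimes(2n-k)}$ which, by equation \ref{eq:unitarycondition} and the corollaries described immediately below it, is an isometry whenever $k\leq n$. Since a composition of isometries is again an isometry, if I can order the absorptions so that $k_j\leq n$ at every step, then $C$ itself is an isometry.

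The technical heart is establishing the existence of such an ordering, and this is where the minimality of $\gamma_A$ enters. For the initial cut $\gamma^{(0)}=\gamma_A$, every adjacent tensor must satisfy $k\leq n$: if some adjacent $T$ had $k>n$ legs on $\gamma_A$, reassigning $T$ from the $A$-side to the $\bar A$-side would yield a cut of length $|\gamma_A|-2(k-n)<|\gamma_A|$ still separating $A$ from $\bar A$, contradicting minimality. So we can start the greedy process. After the first absorption the new moving cut $\gamma^{(1)}$ is no longer guaranteed to be minimal, but an iterated version of the same reassignment argument—combined with the non-positive curvature of the underlying tiling to prevent the procedure from stalling at a local minimum above $|\gamma_A|$—shows that at every stage an absorbable tensor with $k\leq n$ remains available.

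The main obstacle is making this iterated reassignment argument for the non-initial moving cuts fully rigorous. Specifically, one must show that whenever the greedy procedure would get stuck at some $\gamma^{(j)}$ with every adjacent unabsorbed tensor satisfying $k>n$, a finite sequence of ``wrong-direction'' reassignments produces a new valid cut between $A$ and $\bar A$ of length strictly less than $|\gamma_A|$, contradicting the minimality hypothesis. This is where the non-positive-curvature condition on the tiling is essential: on such a background the cut-length decrements obtained by each reassignment cannot be offset by graph-theoretic accidents, so iterating drives the length below $|\gamma_A|$ in finitely many steps. Once this geometric fact is in hand, the greedy absorption sweeps across the entire $A$-side and exhibits $C$ as a composition of isometries, completing the proof.
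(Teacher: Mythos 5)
The thesis itself does not prove Theorem \ref{thm:happy}: it is presented as a restatement of a result of Pastawski et al.\ \cite{pastawski2015holographic}, accompanied only by a one-sentence gloss of their method (``both sides of a minimal cut can be interpreted as a unitary circuit from the cut legs and a subset of the boundary legs to the remainder of the boundary legs''). Your greedy-absorption strategy is indeed the strategy of that reference, so you are reconstructing the right proof. The local step is sound: a perfect tensor with $k\leq n$ of its $2n$ legs already contracted acts as an isometry into its remaining legs by \ref{eq:unitarycondition}, and a composition of isometries is an isometry. Your reassignment argument for the \emph{initial} cut is also essentially correct, modulo one unaddressed case: if the adjacent tensor carries boundary legs of $A$, moving it to the $\bar A$ side does not yield a cut homologous to $A$, so the comparison $|\gamma_A|-2(k-n)<|\gamma_A|$ cannot be invoked for such tensors and they need separate treatment.

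The genuine gap is exactly where you place it, and it is not a formality --- it is the entire combinatorial content of the theorem. Your proposed repair does not close: each greedy absorption with $k\leq n$ changes the length of the moving cut by $2(n-k)\geq 0$, so at a stall one only knows $|\gamma^{(j)}|\geq|\gamma_A|$ with no control on the excess; a ``wrong-direction'' absorption of a stuck tensor with $k>n$ then decreases the moving cut by $2(k-n)$, but nothing forces the accumulated decrease to exceed $|\gamma^{(j)}|-|\gamma_A|$, so no cut shorter than $|\gamma_A|$ is ever exhibited and no contradiction with minimality is obtained. The assertion that non-positive curvature prevents this from being ``offset by graph-theoretic accidents'' is precisely what needs proof. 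In \cite{pastawski2015holographic} the difficulty is organized differently: the greedy algorithm is run \emph{inward from the boundary}, so it never stalls by construction; it terminates at a greedy geodesic $\gamma_A^\star$ that automatically supports an isometry into $A$; and non-positive curvature is used to show the greedy geodesics grown from $A$ and from $\bar A$ coincide, after which the rank bound \ref{eq:rankbound} sandwiches $|\gamma_A^\star|$ against the entropy and forces $\gamma_A^\star$ to be the minimal cut. Until you either prove your stalling lemma or reorganize the argument along those lines, the proposal establishes the isometry property only for cuts reachable from $\gamma_A$ by a non-stalling sweep, not for arbitrary minimal cuts.
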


We will refer to cuts of a network that define isometries on both sides as \emph{isometric cuts}. This result allows us to calculate the state defined on a cut as given in eq. \ref{eq:stateoncut} whenever the cut is minimal. Recalling that all of the contractions in a HaPPY network are performed with maximally entangled pairs, \ref{eq:stateoncut} becomes
\begin{align} 
\ket{\gamma}_{B \bar{B}} = (C^\dagger C\otimes \bar{C}^\dagger \bar{C}) \bigotimes_{i=1}^n \ket{\Psi^+}_{B_i \bar{B}_i}.
\end{align}
When the cut is minimal theorem \ref{thm:happy} gives $C^\dagger C=\mathbb{I}$ and $\bar{C}^\dagger \bar{C}=\mathbb{I}$, so the state on the cut is just a collection of maximally entangled pairs, with one pair for each edge cut by $\gamma$. 

In fact, we can straightforwardly extend theorem \ref{thm:happy}  to an if and only if statement as follows.

\begin{theorem}\label{thm:converse}
In a HaPPY network, a cut $\gamma$ enclosing a single boundary interval defines an isometry on both sides if and only if $\log \dim \gamma$ is minimal.
\end{theorem}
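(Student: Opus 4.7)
The forward direction is exactly Theorem \ref{thm:happy}, so I only need to establish the converse: if a cut $\gamma$ enclosing a single boundary interval $A$ defines isometries $C$ and $\bar{C}$ on both sides, then $\log \dim \gamma$ is minimal among cuts anchored on $A$.

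The plan is to compute the entanglement entropy $S(A)$ in two different ways and compare them. First I would apply the isometric-cut assumption to the expression \ref{eq:stateoncut}: since all contractions in a HaPPY network are performed with maximally entangled pairs, the state on the cut is
\begin{align}
\ket{\gamma}_{B\bar{B}} = (C^\dagger C \otimes \bar{C}^\dagger \bar{C})\bigotimes_{i=1}^{|\gamma|}\ket{\Psi^+}_{B_i \bar{B}_i} = \bigotimes_{i=1}^{|\gamma|}\ket{\Psi^+}_{B_i \bar{B}_i},
\end{align}
where $|\gamma|$ counts the edges crossed by $\gamma$, and the second equality uses $C^\dagger C = \mathbb{I}$ and $\bar{C}^\dagger \bar{C} = \mathbb{I}$. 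Then, since the boundary state is $\ket{\psi}_{A\bar{A}} = (C \otimes \bar{C})\ket{\gamma}_{B\bar{B}}$ and the von Neumann entropy is preserved under conjugation by isometries, I would conclude
\begin{align}
S(A) = S\!\left(\tr_{\bar{B}}\ketbra{\gamma}{\gamma}\right) = \log \dim \gamma.
\end{align}

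Next I would invoke the general tensor-network bound \ref{eq:rankbound}, which applies to \emph{every} cut $\gamma'$ anchored on $A$ and reads $S(A) \leq \log \dim \gamma'$. Combining the two gives $\log \dim \gamma = S(A) \leq \log \dim \gamma'$ for all such $\gamma'$, so $\gamma$ achieves the minimum of $\log \dim$ over cuts enclosing $A$, which is the statement we wanted.

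I do not anticipate a serious obstacle here: both ingredients (the simplification of the cut state under the isometry hypothesis, and the universal rank bound) have already been established in the excerpt, and the converse direction is essentially a matching argument. The only subtlety worth being explicit about is that ``minimal'' in the statement refers to minimizing $\log \dim \gamma$ (equivalently, the number of edges cut, since every edge carries a maximally entangled pair of the same bond dimension), which is exactly the quantity that appears on both sides of the comparison.
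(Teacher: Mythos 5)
Your proposal is correct and follows essentially the same route as the paper's proof: both establish $S(A)=\log\dim\gamma$ from the isometry hypothesis (the paper by noting $\rho_A=CC^\dagger$ is a projector of rank $\dim\gamma$, you by invoking isometric invariance of the entropy applied to the maximally mixed cut state — the same computation phrased differently), and both then compare against the rank bound \ref{eq:rankbound} to conclude minimality. No gaps.
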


\begin{proof}
That a cut being minimal implies the maps it defines are isometries is given as theorem \ref{thm:happy}. 

Next we show that an isometric cut is minimal. Consider the boundary state as written in \ref{eq:psifromgamma}, which corresponds to the diagram in figure \ref{fig:cutmapprocedure}a. To form the density matrix on a region $A$ we draw an arrow reversed duplicate of \ref{fig:cutmapprocedure}a, and contract the $\bar{A}$ legs. Then since $\bar{C}^\dagger \bar{C} = \mathbb{I}$ and $\ket{\Psi} = \bigotimes_i \ket{\Psi^+}$ we are left with
\begin{align}
\rho_A = C C^\dagger.
\end{align}
To get the normalization factor note that $\tr(CC^\dagger) = \tr (C^\dagger C) = \log \dim \gamma$. Further, since $CC^\dagger$ is a projector its non-zero eigenvalues are equal to one. Using these two facts we have that
\begin{align} \label{eq:bb}
S(\rho_A) = \log \dim \gamma.
\end{align}
At the same time, the bound \ref{eq:rankbound} gives that $S(\rho_A) \leq \log\dim \gamma'$ for any cut $\gamma'$ in the network. Combining this with \ref{eq:bb} we have
\begin{align}
\log \dim \gamma \leq \log \dim \gamma'
\end{align}
for any cut in the network. Thus any cut which defines an isometry on both sides is minimal.\end{proof}

As a consequence of theorem \ref{thm:converse} the RT formula for HaPPY networks can be restated as
\begin{align} \label{eq:newRT}
S(A) = L(\gamma_{iso}^A).
\end{align}
Here $\gamma$ is any isometric cut enclosing $A$. We will refer to this as the \emph{isometric cut formula}\footnote{As with theorems \ref{thm:happy} and \ref{thm:converse} the isometric cut formula is proven only for boundary regions consisting of a single interval.}.

It would be useful to extend theorem \ref{thm:happy} to the case where the network includes bulk legs so as to establish the isometric subregion property for HaPPY networks. One HaPPY network with bulk legs which appears to have this property is shown in figure \ref{fig:treenetwork}. Each vertex has three planar legs and one bulk leg. The network is a tree in the computer science sense, and we can assign the root node a label of $0$, the vertices one edge away from the root $1$, and so on. The edges may then be assigned a directionality by having them point from lower to higher label numbers. By checking cases, the reader can convince themselves that minimal cuts define isometric mappings from bulk plus cut legs to the boundary.

\begin{figure}
\begin{center}
\includegraphics[scale=0.7]{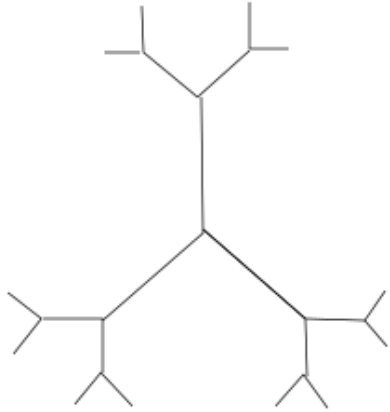}
\end{center}
\caption{A tree graph on which the 4 legged perfect tensor can be placed to build a tensor network. The resulting network seems to have the subregion isometry property, as can be argued by checking cases.}
\label{fig:treenetwork}
\end{figure}

\section{Random tensor networks}\label{sec:randomreview}

The key property of a HaPPY network which gave it the RT formula was that a minimal cut through the network defined isometries on both sides. However, HaPPY networks with bulk legs have not been proven to have the subregion isometry property, a key feature we would like in a holographic toy model. There are now other classes of network which are known to have this property however, the most general and versatile among them being the random tensor networks which we now discuss.

The reduced density matrix of any subsystem of a state defined by a perfect tensor is maximally mixed. In fact, we can characterize a perfect tensor as one with maximal entanglement across any division of the Hilbert space. This is actually a generic property: a state drawn at random according to the Harr measure will with high probability have nearly maximal entanglement across its subsystems. This expectation becomes more and more precise as the dimension $D$ of each subsystem becomes large. 

There is a tradition in quantum information theory of studying random quantum states \cite{collins2016random} and there are well developed techniques available for their study. Because of this, and because the closely related perfect tensors have already proven useful, it is natural to consider tensor networks built from random tensors in the holographic context.

To build a random tensor network one begins with a graph. We will give the vertices labels $x$, and place states $\ket{V_x}$ at each vertex. $\ket{V_x}$ is defined by a tensor with $n_x+1$ legs, where $n_x$ is the number of edges connected to vertex $x$. The states $\ket{V_x}$ are drawn independently at random according to the Harr measure by beginning with a reference state $\ket{0_x}$ and acting with a random unitary $U$.

Next, maximally entangled pairs are placed on the edges of the graph and used to contract legs. One uncontracted leg is left on each vertex which will serve as a bulk leg. We will then project some bulk state into these legs, leaving us with a network that defines a boundary state. We can write the uncontracted state as a density matrix,
\begin{align}\label{eq:randomrho}
\rho = \bigotimes_x \ketbra{V_x}{V_x}
\end{align}
and the contraction can be expressed using the partial trace,
\begin{align}
\rho = \tr\left( \rho_P \left(\bigotimes_x \ketbra{V_x}{V_x}\right) \right)
\end{align}
with 
\begin{align}
\rho_P = \rho_b \otimes\left( \bigotimes_x \ketbra{xy}{xy} \right).
\end{align}
The dimension of a leg connecting vertices $x$ and $y$ will be referred to as the \emph{bond dimension} $D_{xy}$. 

It is possible to make use of random matrix techniques to calculate entanglement properties of the state eq. \ref{eq:randomrho}. We will only outline a few of the key steps in doing this here and refer the reader to Hayden \cite{hayden2016holographic} for details. First, one can realize that it is easier to calculate the 2nd Renyi entropy than the Von Neumann entropy when using random tensors. Later it can be argued that the Renyi and Von Neumann entropies agree for these particular states. The 2nd Renyi entropy is defined by
\begin{align}\label{eq:bartwo}
e^{-S_2(\rho_A)} = \frac{\tr[(\rho\otimes \rho)\mathcal{F}_A]}{\tr[\rho \otimes \rho]}
\end{align}
where $\mathcal{F}_A$ is known as a swap operator and is defined by
\begin{align}
\mathcal{F}_A(\ket{n}_{A1}\otimes\ket{m}_{\bar{A}1}\otimes\ket{n'}_{A2}\otimes\ket{m'}_{\bar{A}2}) = \ket{n'}_{A1}\otimes\ket{m}_{\bar{A}1}\otimes\ket{n}_{A2}\otimes\ket{m'}_{\bar{A}2}. 
\end{align}
The convenience of the Renyi entropy lies in the fact that we can average over all tensors $\ket{V_x}$ at each site before doing the projection and the trace, so that
\begin{align}\label{eq:barone}
\overline{\tr[(\rho\otimes \rho)\mathcal{F}_A]} = \tr \left[(\rho_P\otimes\rho_P)\mathcal{F}_A \bigotimes_x \overline{\ketbra{V_x}{V_x}\otimes\ketbra{V_x}{V_x}} \right].
\end{align}
The average over tensors can now be done explicitly,
\begin{align}
\overline{\ketbra{V_x}{V_x}\otimes\ketbra{V_x}{V_x}} &= \int dU \,\, (U\otimes U) \ketbra{0}{0} \otimes \ketbra{0}{0} (U^\dagger \otimes U^\dagger) \nonumber \\
&= \frac{\mathbb{I}_x+\mathcal{F}_x}{D_x^2+D_x}.
\end{align}
Inserting this into \ref{eq:barone} and using the result to take the average of equation \ref{eq:bartwo}, one finds that evaluating the 2nd Renyi entropy becomes equivalent to calculating the partition function of an Ising model with spin 1/2 variables. In the limit of large bond dimension $D_{xy}$ this partition function can be evaluated by approximating it by its minimal energy configuration.

Recall our notation in which $H_{A\bar{A}}$ is the boundary Hilbert space, with $H_A$ the Hilbert space enclosed by some cut $\gamma$ which we are considering. The bulk legs enclosed by $\gamma$ form the Hilbert space $H_{C}$, and those outside the cut form $H_{\bar{C}}$. Thus the entire network can be thought of as a map $M: C\bar{C}\rightarrow A\bar{A}$ or, by attaching maximally entangled pairs to the bulk legs, as a state $\ket{\Psi_M}_{A\bar{A}C\bar{C}}$. 

We would like to understand when a random tensor network will have the isometric subregion property, which we saw in section \ref{sec:subregionisometry} leads to both the RT and error correction properties. To do this, it is first useful to understand when the mapping $M: C\bar{C}\rightarrow A\bar{A}$ from the full bulk Hilbert space to the boundary is an isometry. This mapping being an isometry is actually equivalent to the state $\ket{\Psi_M}_{A\bar{A}C\bar{C}}$ being maximally entangled across the $C\bar{C}$ and $A\bar{A}$ subsystems. The random matrix techniques outlined above can be used to calculate the entropy $S(C\bar{C})$. Doing so, it is found that a necessary and sufficient condition for $M: C\bar{C}\rightarrow A\bar{A}$ to be an isometry is that 
\begin{align}\label{eq:dimensioncondition}
|\Omega| \log D_b < |\partial \Omega| \log D
\end{align}
where $\Omega$ is any region in the bulk, $D_b$ is the dimension of the bulk legs, and $D$ is the dimension of the planar legs. 

We now have the background we need to understand the isometric subregion property in random tensor networks. We will again look at entanglement properties of $\ket{\Psi_M}_{A\bar{A}C\bar{C}}$, in particular, we can show that if 
\begin{align}\label{eq:errorcorrectingconditions}
I(C:\bar{A}\bar{C})&=0 \nonumber \\
&\text{and} \nonumber \\
S(C) &= \log \dim C
\end{align}
then it follows that the map given by cutting the network along $\gamma$ is an isometry from the cut legs and bulk legs to the boundary. Indeed, the random matrix techniques can be used to show the above statements are true assuming \ref{eq:dimensioncondition} and taking a large bond dimension limit.

To see that \ref{eq:errorcorrectingconditions} implies the subregion isometry property, note that $I(C:\bar{A}\bar{C})=0$ gives that $\rho_{C\bar{C}\bar{A}} = \rho_A \otimes \rho_{\bar{A}\bar{C}}$ and that $S(C) = \log \dim C$ implies $\rho_A$ is maximally mixed. Thus
\begin{align}
\rho_{C\bar{C}A} = \frac{\mathbb{I}_C}{\dim C} \otimes \rho_{\bar{C}A}.
\end{align}
One purification of this state is $\ket{\Psi_M}_{C\bar{C}A\bar{A}}$, but another purification is
\begin{align}
\ket{\Psi} = (\ket{\Psi^+}^{\otimes n})_{CE} \otimes \ket{\phi}_{\bar{C}\bar{A}\bar{E}}
\end{align}
where $\ket{\Psi^+}$ are maximally entangled pairs. 

\begin{figure}
\centering
\begin{subfigure}{0.45\textwidth}
  \centering
   \includegraphics[scale=0.4]{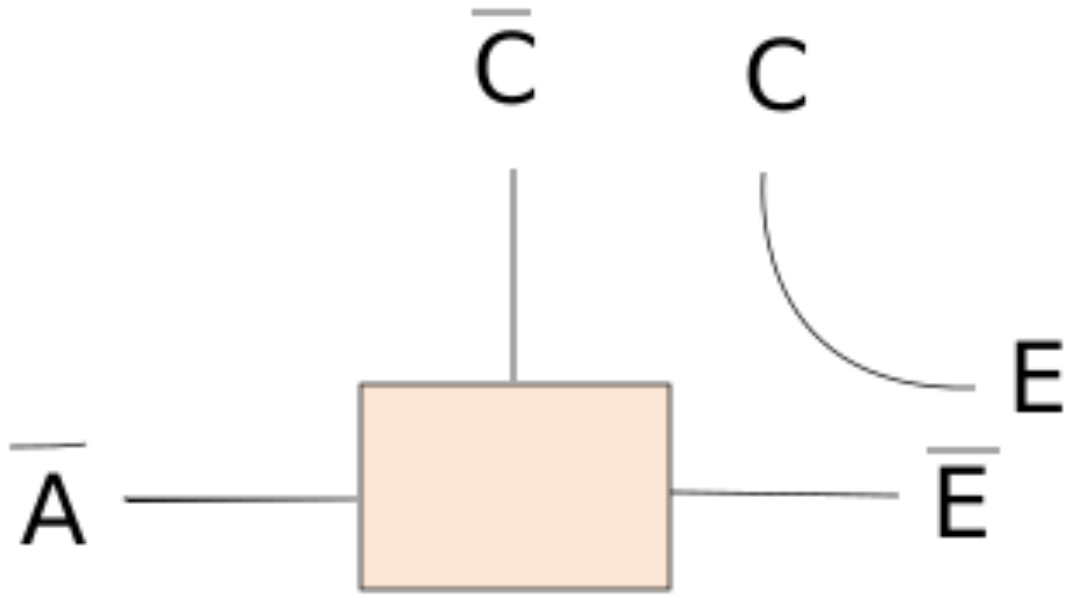}
\caption{}
\end{subfigure}
\hfill
\begin{subfigure}{0.45\textwidth}
  \centering
  \includegraphics[scale=0.4]{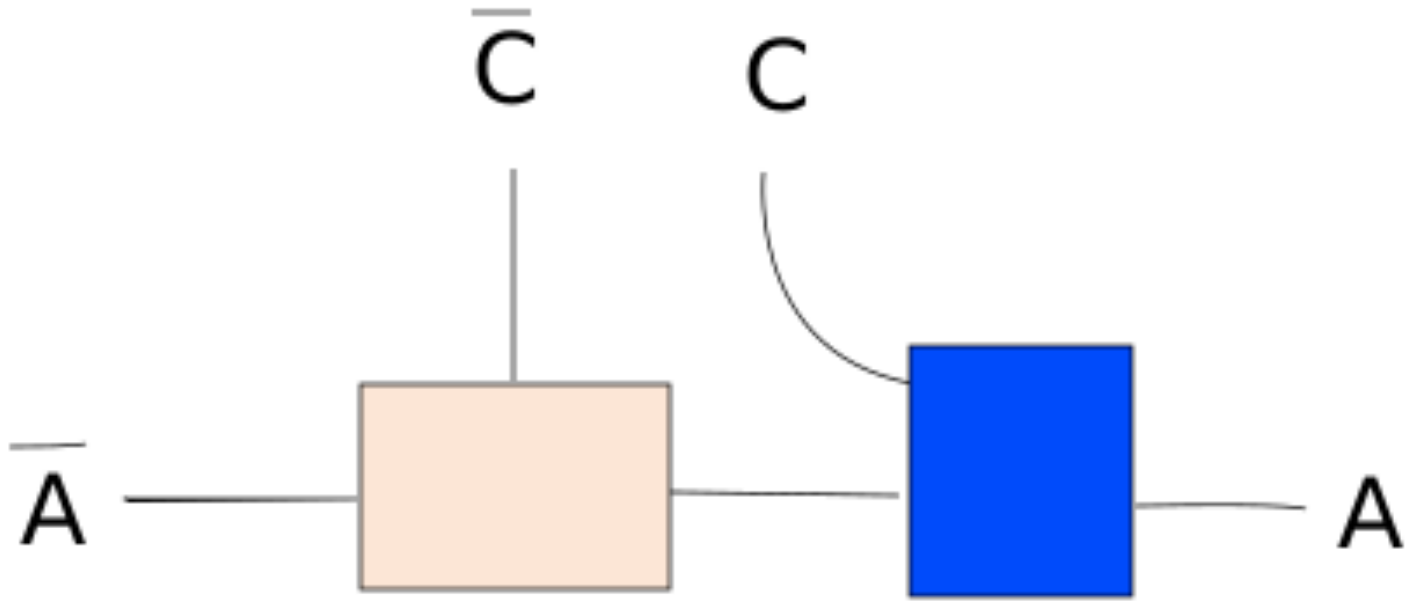}
\caption{}
\end{subfigure}
\caption{a) Structure of the state $\ket{\Psi}$. The beige box at left represents the state $\ket{\phi}_{\bar{A}\bar{C}\bar{E}}$. The curved black line represents the maximally entangled state $(\ket{\Psi^+}^{\otimes n})_{CE}$. b) Structure of the state $\ket{\Psi_M}$. The blue box represents the isometry $V:E \bar{E}\rightarrow A$ discussed in text.}
\label{fig:randomstate}
\end{figure}

Now, we make use of the fact that different purifications of the same state are related by isometries, in this case there must exist a map $V:\bar{E}E \rightarrow A$ such that
\begin{align}
\ket{\Psi_M} = V \ket{\Psi} = V((\ket{\Psi^+}^{\otimes n})_{CE} \otimes \ket{\phi}_{\bar{C}\bar{A}\bar{E}}). 
\end{align}
Figure \ref{fig:randomstate} illustrates the structure of this state diagrammatically. We see that the map $V$ can be identified with the map defined by $\gamma$ from a set of interior legs, which here form the system $\bar{E}$, and the bulk legs $C$ into the boundary $A$. Similarly, if we can establish that $I(\bar{C}:AC)=0$ then it follows that the map defined by the exterior of $\gamma$ is also an isometry from cut legs and bulk legs to the boundary region $\bar{A}$. Once we have that the two sides of the cut $\gamma$ are isometries, the error correction property and RT formula follow immediately by the same arguments as given in section \ref{sec:subregionisometry}. 

\chapter{Tensor networks for dynamic spacetimes}\label{sec:tndynamics}

The HaPPY networks of section \ref{sec:happynetworksreview} display the RT formula. In this sense these networks are a toy model with features analogous to AdS/CFT. In this analogy, the boundary legs play the role of CFT degrees of freedom and the tensor networks graph geometry plays the role of the geometry of a bulk Cauchy slice. It is interesting to understand the limitations of this toy model. In particular, we know that the RT formula applies only to static spacetimes. For a dynamic spacetime RT is replaced by the HRT or maximin formula, which so far our tensor network model contains no analogue of. 

In this chapter we will extend the HaPPY network models to include a network analogue of the maximin formula. Performing this extension forces on us a changed perspective, in particular highlighting \ref{eq:newRT} as the most productive way to understand the RT formula and connecting the tensor network picture more closely with the error correction picture developed by Harlow \cite{harlow2017ryu}. 

We will argue in section \ref{sec:maximinlessons} that the definition of length used previously in discussions of tensor networks, equation \ref{eq:graphlength}, is incompatible with any description of a dynamic spacetime. From there, we go on to develop a new definition of length in networks based on the notion of a state on a cut developed in section \ref{sec:basics}, and to discuss our tensor network model for a dynamic spacetime in \ref{sec:tndynamics}. 

\section{Length and extremal curves in tensor networks} \label{sec:graphlengthanddynamics}

\subsection{Lessons from the maximin formula} \label{sec:maximinlessons}

Recall that the maximin formula states that the entropy of a boundary region $A$ can be calculated as
\begin{align} \label{eq:maximin}
S(A) = \underset{\Sigma}{\text{max}} \left(\underset{\gamma_A}{\min}\,\, L(\gamma_A)  \right).
\end{align}
That is consider a spacelike slice of the boundary and a subset $A$ of this slice. On each spacelike surface $\Sigma$ which has the chosen boundary, calculate the length of the minimal surface homologous to $A$. From the set of all those lengths choose the largest element. The resulting length will give the entropy $S(A)$. 

To translate this statement into tensor network language it is clear that we need to think of a boundary state as associated with a set of networks. Indeed, as mentioned preceding equation \ref{eq:slickpsifree}, there are various ways we can modify a network while preserving the boundary state. Beginning with a defining network these transformations give a set of networks corresponding to a single boundary state. We will explore the possibility that a set of networks generated in this way can be searched over to calculate boundary entropies, analogous to optimization over spacelike slices in the maximin formula. 

However, suppose that we have decided on a set of networks and that the maximin formula is true for these networks and their boundary state. Then the maximization step of the maximin formula gives that the minimal lengths in each network are bounded above by the entropy,
\begin{align} \label{eq:lengthentropy}
\underset{\gamma_A}{\text{min}} \, L(\gamma_A) \leq S(A).
\end{align}
This is a key inequality restricting the possible definitions of the length $L(\gamma_A)$ in the network. Indeed, suppose that we took $L(\gamma)=L_G(\gamma)$, the graph length. Then the rank bound on the entanglement entropy given in \ref{eq:rankbound} says that
\begin{align}
S(A) \leq \underset{\gamma_A}{\text{min}}\, (\log(\dim \gamma_A)) = \underset{\gamma_A}{\text{min}} \, L_G(\gamma).
\end{align}
This is the opposite inequality to \ref{eq:lengthentropy}, so we have that $S(A) = \underset{\gamma_A}{\text{min}} \, L_G(\gamma)$ for all networks in the set optimized over. This means that every network in the set must contain the extremal curve anchored on $A$. Repeating this for each of the possible boundary regions, we would conclude that every network in the set must contain the extremal curves for each boundary region. In a dynamic spacetime however no one slice should contain all of the extremal curves. Taking the graph length then prevents any description of the geometry of slices other than the constant time slices of static spacetimes.

We see that a requirement for describing spacelike slices of dynamic geometries using tensor networks is a new definition of length in networks. With a definition of length in hand, one approach is to determine the extremal cut by variation over the set of networks. However, it turns out to be simpler to generalize the isometric cut formula \ref{eq:newRT} than to try and generalize the statement \ref{eq:RT} of RT in terms of minimal cuts. Indeed, a possible generalization of \ref{eq:newRT} is just \ref{eq:newRT} again, with the modification that the isometric cut can now be chosen from within a set of networks. We find in the next section that there is a simple way to define $L(\gamma)$ that has reasonable geometric properties and which extends the isometric cut formula to the dynamic setting.

\subsection{A definition of length in tensor networks} \label{sec:lengthdefined}

From our analysis of the maximin formula we know that the definition of length will need to be changed. We claim that there is a simple way to define $L(\gamma)$ which guarantees $S(A) = L(\gamma_{iso}^A)$ whenever such an isometric cut exists in the set of networks associated with the boundary state. To see this first calculate the boundary state on $A$ in terms of the operators defined by an isometric cut $\gamma_{iso}^A$. This is most easily done by looking again at figure \ref{fig:cutmapprocedure} and considering an arrow reversed duplicate of the network in figure \ref{fig:cutmapprocedure}a. We then contract the $\bar{A}$ legs and use that $D^\dagger D = \mathbb{I}$, which yields
\begin{align}
\rho_A = C\, \tr_{\bar{B}} (\ketbra{\Psi}{\Psi}) C^\dagger.
\end{align}
Since $C$ is an isometry, we have that $S(\rho_A) = S(\tr_{\bar{B}}\ketbra{\Psi}{\Psi})$. Next, consider the length $L(\gamma_{iso}^A)$. As discussed in section \ref{sec:stateoncut} any cut has a state associated with it, given by \ref{eq:stateoncut}. In particular since $\gamma_{iso}^A$ is an isometric cut we have
\begin{align} \label{eq:isostate}
\ket{\gamma_{iso}}= \ket{\Psi}_{B \bar{B}}.
\end{align}
From this it is clear that defining the length as the entropy of one side of $\ket{\Psi}$ would correctly compute $S(A)$. We prefer to write this more symmetrically as the mutual information 
\begin{align} \label{eq:mutualdef}
L(\gamma_{iso}^A) = \frac{1}{2}I_{\ket{\gamma}}(B:\bar{B}).
\end{align}
For a HaPPY network, the state $\ket{\Psi}$ is a product of maximally entangled pairs and the length of a minimal cut reduces to the graph length. What about an arbitrary cut in a HaPPY network? In this case we make use of the fact that when $\ket{\Psi}$ is product, the length of a minimal cut becomes a sum over each leg in the cut
\begin{align} 
L(\gamma_{iso}^A) = \sum_i \frac{1}{2}I_{\ket{\gamma}}(B_i:\bar{B}_i).
\end{align}
Thus it is natural to associate a length to each leg individually,
\begin{align} \label{eq:indivlength}
L(\gamma_i)=\frac{1}{2}I_{\ket{\gamma}}(B_i:\bar{B}_i).
\end{align}
For an arbitrary curve we can define its length to be the sum of the lengths of each leg, where we calculate the length of a single leg by finding an isometric cut containing that leg. In the HaPPY network this assigns all legs a length of $\log \dim \gamma_i$.

For a non-HaPPY network we can attempt to assign lengths to every leg by the same procedure of looking at the product factors of the isometric cuts which are in that network. In general however not every leg will be part of an isometric cut, and it may not be possible to assign a length to every leg. This manner of building up the lengths of arbitrary cuts using the lengths of isometric cuts is reminiscent of the differential entropy formula \cite{balasubramanian2014bulk, headrick2014holographic}. Extremal cuts in the differential entropy formula play the role of isometric cuts in the procedure outlined here. This is consistent with our interpretation of \ref{eq:newRT} as applying to dynamic spacetimes, since the isometric cuts of \ref{eq:newRT} are playing the role of extremal curves.

As a basic check on this definition of length, we should confirm that all isometric cuts passing a single leg will assign the same value of length to that leg. Indeed, for any isometric cut crossing a segment $\gamma_i$ the length of that segment is given by the mutual information $I(B:\bar{B})/2$ computed in the projecting state $\ket{\Psi_i}$, and is independent of the operators $C$ and $\bar{C}$ defined by whichever isometric cut has been chosen. Further, the projecting state $\ket{\Psi_i}$ is fixed for a given network. 

We can also notice that contracting the network in a different basis has no effect on the length of a cut. This follows from the invariance of the mutual information under local unitaries. Finally, it would be nice to see that if a cut $\gamma$ is composed of two segments $\gamma_1$ and $\gamma_2$ then 
\begin{align}\label{eq:addit}
L(\gamma) = L(\gamma_1) + L(\gamma_2).
\end{align}
We'll refer to this as the additivity property. The additivity property is not guaranteed by our definition of length, but rather depends on the structure of the state $\ket{\Psi}$ appearing in \ref{eq:isostate}. For example if this state is product across each leg, that is if
\begin{align}
\ket{\gamma_{iso}} = \bigotimes_{i=1}^n \ket{\Psi_i},
\end{align}
then the length is additive at the level of individual legs. However, in other cases it may happen that entanglement is present across the $B_i$, in which case the length will not be additive across individual legs. In the dynamic example given in section \ref{sec:dynamicexample} we will allow legs which are contracted with a common vertex to share entanglement, meaning additivity may fail at the level of a small number (in the case there, three) legs.


\subsection{A static example} \label{sec:staticexample}

As an illustration of this assignment of length we look at a network which does not satisfy RT when the graph length is used, but does when using the mutual information based definition. Our example is based on the network shown in figure \ref{fig:sixlegexample}a. The six legged tensors are perfect tensors, and the two legged tensors shown as solid black dots are maximally entangled pairs used to form the contraction. This is a HaPPY network and the boundary entropies are all given by the minimal number of legs cut to separate off a boundary region. 

\begin{figure}
\begin{center}
\includegraphics[scale=0.8]{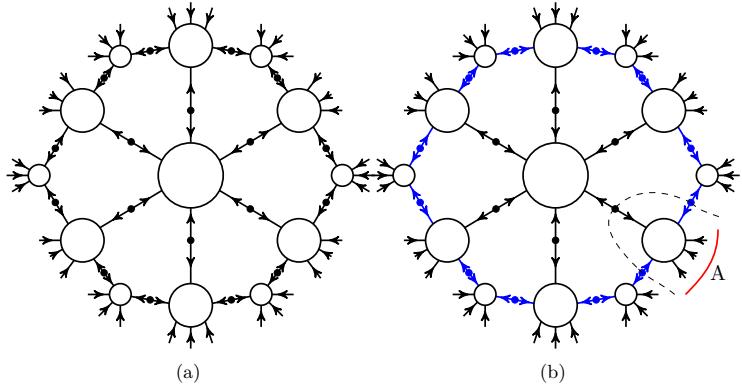}
\end{center}
\caption{(a) A HaPPY network. Six legged vertices are perfect tensors and two legged projecting pairs are maximally entangled states. All boundary entropies are given by the graph length of a minimal cut in the network. (b) A network which satisfies the Ryu-Takayanagi formula using the mutual information based definition of length, but does not satisfy Ryu-Takayanagi when using the graph length. The blue dots and legs represent the projecting state given in eq. \ref{eq:newprojector}. The dashed line represents the isometric cut for the boundary region $A$.}
\label{fig:sixlegexample}
\end{figure}

For our example we replace the maximally entangled pairs around the edge of the network with another state, $\ket{\Psi_i}$, which for convenience we write in the form\footnote{It is always possible to write $\ket{\Psi}$ in this way because we can write $\ket{\Psi_i} = A \otimes B \ket{\Psi^+}$
and then use the transpose rule to move $B$ to the other subspace.}
\begin{align} \label{eq:newprojector}
\ket{\Psi_i}=(\mathcal{O} \otimes \mathbb{I}) \ket{\Psi^+}.
\end{align}
The modified network is shown in figure \ref{fig:sixlegexample}b. We claim that this network satisfies the RT formula using our new definition of length in terms of mutual information, but not using the graph length. To see this we begin by computing the entropy of the three boundary legs marked region A.

\begin{figure}
\begin{center}
\begin{tikzpicture}[scale=0.16];
\draw[thick,postaction={on each segment={mid arrow}}] (0,0) -- (0,18);
\draw[thick,postaction={on each segment={mid arrow}}] (0,0) to [out=60,in=-60] (0,18);
\draw[thick,postaction={on each segment={mid arrow}}] (0,0) to [out=120,in=-120] (0,18);
\draw[thick,postaction={on each segment={mid arrow}}] (23,0) to [out=45,in=-45] (23,18);
\draw[thick,postaction={on each segment={mid arrow}}] (23,0) to [out=75,in=-75] (23,18);
\draw[thick,postaction={on each segment={mid arrow}}] (23,0) to [out=105,in=-105] (23,18);
\draw[thick,postaction={on each segment={mid arrow}}] (23,0) to [out=135,in=-135] (23,18);
\bluebrasegment{0}{18}{23}{18}{0.75}
\blueketsegment{0}{0}{23}{0}{0.75}
\arrowedsegment{0}{-3}{0}{-6}
\arrowedsegment{23}{-2}{23}{-6}
\arrowedsegment{0}{24}{0}{21}
\arrowedsegment{23}{24}{23}{20}
\arrowedsegment{-3}{0}{-6}{0}
\arrowedsegment{-6}{18}{-3}{18}
\arrowedsegment{20}{0}{24}{0}
\arrowedsegment{24}{18}{20}{18}
\arrowedsegment{29}{18}{25}{18}
\arrowedsegment{25}{0}{29}{0}
\draw[fill=white] (0,0) circle [radius=3];
\draw[fill=white] (0,18) circle [radius =3];
\draw[fill=white] (23,0) circle [radius=2];
\draw[fill=white] (23,18) circle [radius=2];
\node at (33,9) {$=$};

\draw[thick,postaction={on each segment={mid arrow}}] (40,18) to [out=0,in=0] (40,0);
\draw[thick,postaction={on each segment={mid arrow}}] (48,0) -- (48,18);

\draw[fill=blue] (56,0) circle [radius=0.75];
\draw[fill=blue] (56,18) circle [radius=0.75];

\draw[blue, thick, postaction={on each segment={mid arrow}}] (56,18) to [out=0,in=0] (56,0);
\draw[blue, thick, postaction={on each segment={mid arrow}}] (56,0) to [out=180,in=180] (56,18);
\draw[thick, postaction={on each segment={mid arrow}}] (69,0) to [out=180,in=180] (69,18);

\draw[dashed,thick] (69,18) -- (72,18);
\draw[dashed,thick] (69,0) -- (72,0);

\draw[dashed,thick] (40,18) -- (37,18);
\draw[dashed,thick] (40,0) -- (37,0);

\end{tikzpicture}
\end{center}
\caption{The basic simplification used to compute the density matrix of region A in figure \ref{fig:sixlegexample}b. The six leg tensors are from the edge of the network shown in figure \ref{fig:sixlegexample}b. The effect of the state given in equation \ref{eq:newprojector} is to add a normalization factor, represented as the blue loop at right.}
\label{fig:densityAcontraction}
\end{figure}
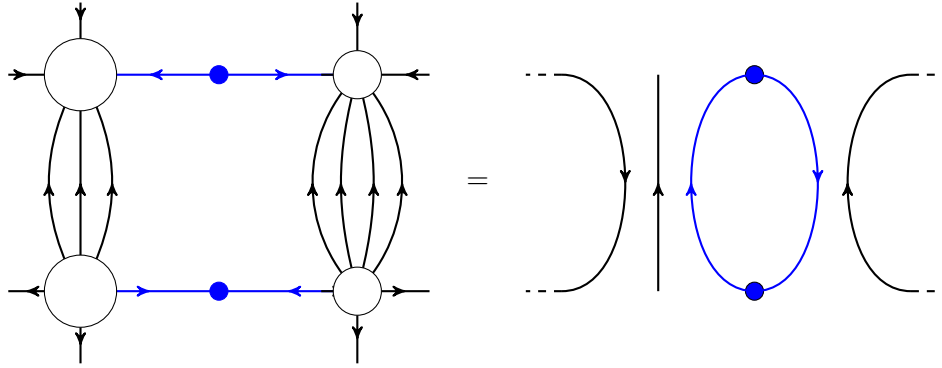

We can go a long ways towards computing this entropy using the graphical notation. To do this we draw an arrow reversed copy of figure \ref{fig:sixlegexample}b, then contract all the legs in $\bar{A}$. To understand what happens when this is done consider the diagram in figure \ref{fig:densityAcontraction}. The simplification shown there gives that all the insertions of $\mathcal{O}$ that are not adjacent to region $A$ turn into pure normalization factors. After continuing the contraction we are left with the density matrix illustrated in figure \ref{fig:densityAsixleg}. As an operator expression, this is
\begin{align} \label{eq:rhoa}
\rho_A = T (\mathcal{O}\mathcal{O}^\dagger\otimes \mathbb{I}\otimes\mathcal{O}\mathcal{O}^\dagger) T^\dagger.
\end{align}
The entropy is given by:
\begin{align} 
S(\rho_A) = S(T^\dagger \rho_A T) = 2\cdot S(\mathcal{O}\mathcal{O}^\dagger) + 1
\end{align}
By choosing $\mathcal{O}$ to be non-unitary we find an entropy less than $3=L_G(\gamma_{min})$, so we have that the RT formula using the graph length fails. 

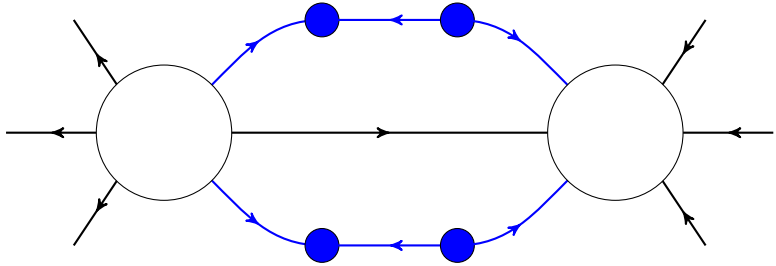
\begin{figure}
\begin{center}
\begin{tikzpicture}[scale=0.3]
\draw[thick,blue,postaction={on each segment={mid arrow}}] (2,2) to [out=45,in=180] (7,5);
\draw[fill=blue] (7,5) circle [radius=0.75];
\draw[thick,blue,postaction={on each segment={mid arrow}}] (13,5) to [out=180,in=0] (7,5);
\draw[fill=blue] (13,5) circle [radius=0.75];
\draw[thick,blue,postaction={on each segment={mid arrow}}] (13,5) to [out=0,in=135] (18,2);
\draw[thick,blue,postaction={on each segment={mid arrow}}] (2,-2) to [out=-45,in=180] (7,-5);
\draw[fill=blue] (7,-5) circle [radius=0.75];
\draw[thick,blue,postaction={on each segment={mid arrow}}] (13,-5) to [out=180,in=0] (7,-5);
\draw[fill=blue] (13,-5) circle [radius=0.75];
\draw[thick,blue,postaction={on each segment={mid arrow}}] (13,-5) to [out=0,in=225] (18,-2);
\draw[thick,postaction={on each segment={mid arrow}}] (0,0) -- (20,0);
\draw[thick,postaction={on each segment={mid arrow}}] (-3,0) -- (-7,0);
\draw[thick,postaction={on each segment={mid arrow}}] (-2,2) -- (-4,5);
\draw[thick,postaction={on each segment={mid arrow}}] (-2,-2) -- (-4,-5);
\draw[thick,postaction={on each segment={mid arrow}}] (27,0) -- (23,0);
\draw[thick,postaction={on each segment={mid arrow}}] (24,5)--(22,2);
\draw[thick,postaction={on each segment={mid arrow}}] (24,-5)--(22,-2);
\draw[fill=white] (0,0) circle [radius=3];
\draw[fill=white] (20,0) circle [radius=3];
\end{tikzpicture}
\end{center}
\caption{Graphical representation of the density matrix of the region $A$ from figure \ref{fig:sixlegexample} b. }
\label{fig:densityAsixleg}
\end{figure}

What is the minimal length when computed using eq. \ref{eq:mutualdef}? The cut with the minimal number of legs actually defines an isometry on both sides\footnote{To be precise, the $\bar{A}$ side of this cut has the property $\bar{C}^\dagger \bar{C} = \alpha \mathbb{I}$ for a scalar $\alpha$. This scalar is divided out when the normalization is added to the state.}. This is in fact what we used when showing that the reduced density matrix $\rho_A$ was given by expression \ref{eq:rhoa}. Since both sides of the cut are isometries the state on the cut is just given by the projecting state, which in this case is
\begin{align}
\ket{\Psi} = \ket{\Psi_i}_{B_1 \bar{B}_1} \otimes \ket{\Psi^+}_{B_2 \bar{B}_2} \otimes \ket{\Psi_i}_{B_3 \bar{B}_3}.
\end{align}
There are two legs with operator insertions, which have length
\begin{align}
L = \frac{1}{2}I_{\Psi_i}(B:\bar{B})= S(\mathcal{O}\mathcal{O}^\dagger),
\end{align}
while the leg with no insertion has length $I_{\Psi^+}(B:\bar{B})/2=1$, giving $L = 2\cdot S(\mathcal{O}\mathcal{O}^\dagger)+1 = S(\rho_A)$. It is straightforward to check the minimal lengths and boundary entropies of any other region $A$ in the network shown in figure \ref{fig:sixlegexample} b agree.

\begin{figure}
\begin{center}
\includegraphics[scale=0.7]{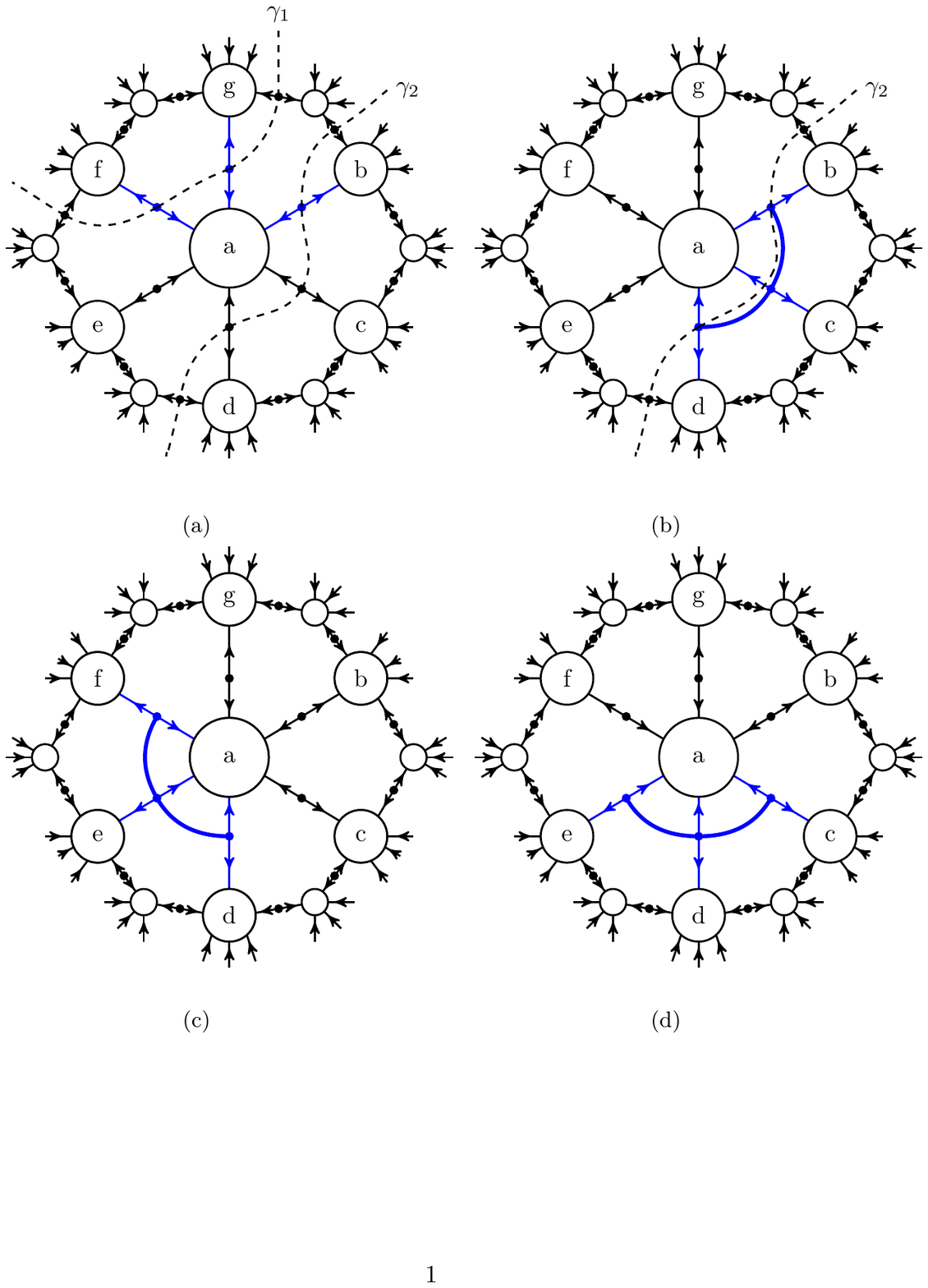}
\end{center}
\caption{The four networks included in the set $F$ considered in text. The projecting states shown in blue in (a) are defined by an operator $\mathcal{O}$ acting on the maximally entangled state, these can be pushed to any subset of three legs, as seen in (b-d). In general the pushed through operator will not have a tensor product structure, so the corresponding projecting state will be entangled across three legs. This is indicated here by the thick blue line in (b-d). All four networks shown here have the same boundary state. The entropies of subsets of boundary legs is calculated by choosing the network which contains an isometric cut enclosing those boundary legs, and applying the isometric cut formula. No one network contains an isometric cut for every boundary region, but the set of four networks together do.}
\label{fig:dynamicexample}
\end{figure}

\section{Dynamic tensor network states}\label{sec:dynamicexample}

Consider an AdS spacetime and a spacelike slice of its boundary. The maximin formula shows that the entanglement entropy of a given boundary region can be found by determining the area of an extremal surface extending into the AdS spacetime. For a dynamic spacetime these extremal surfaces may lie in many different slices of the interior. In the tensor network picture we have identified isometric cuts as the network analogue of extremal surfaces. Further, we have suggested a set of networks contracting to a single boundary state is the analogue of the set of spacelike slices of the bulk spacetime. A tensor network state which is analogous to an evolving spacetime then should have isometric cuts for different boundary regions living in different networks drawn from this set. 

We will call this set of networks $F$, and specify the networks it contains by giving a defining network $N_0$ along with a set of allowed transformation rules. Continuing our analogy, we view these transformations as corresponding to deformations of the interior spacelike slices. Importantly, these transformations must preserve the boundary state. An example of such a transformation was given as equation \ref{eq:slickpsifree}.

To construct examples of boundary states with a geometry corresponding to a dynamic spacetime we begin with the example network of figure \ref{fig:sixlegexample}a and replace three of the maximally entangled pairs which are projected into the central vertex with the state
\begin{align}
\ket{\Psi_i}=(\mathcal{O} \otimes \mathbb{I}) \ket{\Psi^+}.
\end{align}
This is our defining network, shown in figure \ref{fig:dynamicexample}a. The allowed transformations we take to be the operator pushing operation discussed in section \ref{sec:basics}. This results in the four networks shown in figure \ref{fig:dynamicexample}b-d being included in the set $F$.

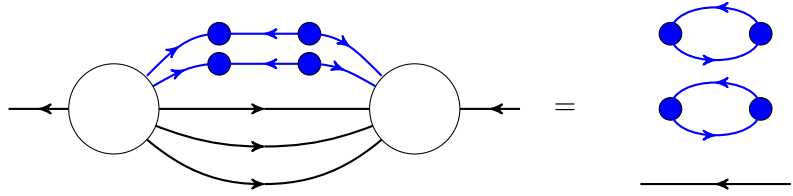
\begin{figure}
\begin{center}
\begin{tikzpicture}[scale=0.2]
\draw[thick,postaction={on each segment={mid arrow}}] (-3,0) -- (-7,0);
\draw[thick,postaction={on each segment={mid arrow}}] (27,0) -- (23,0);

\draw[thick,blue,postaction={on each segment={mid arrow}}] (2.59,1.5) to [out=30,in=180] (7,3);
\draw[fill=blue] (7,3) circle [radius=0.75];
\draw[thick,blue,postaction={on each segment={mid arrow}}] (13,3)--(7,3);
\draw[fill=blue] (13,3) circle [radius=0.75];
\draw[thick,blue,postaction={on each segment={mid arrow}}] (13,3)to [out=0,in=150] (17.4,1.5);

\draw[thick,->] (0,0) to [out=-22.5,in=180] (10,-2.5);
\draw[thick,->] (10,-2.5) to [out=0,in=-157.5] (20,0);
\draw[thick,->] (0,0) to [out=0,in=180] (10,0);
\draw[thick,->] (10,0) to [out=0,in=180] (20,0);

\draw[thick,blue,postaction={on each segment={mid arrow}}] (2.2,2.2) to [out=45,in=180] (7,5);
\draw[fill=blue] (7,5) circle [radius=0.75];
\draw[thick,blue,postaction={on each segment={mid arrow}}] (13,5)--(7,5);
\draw[fill=blue] (13,5) circle [radius=0.75];
\draw[thick,blue,postaction={on each segment={mid arrow}}] (13,5)to [out=0,in=135] (17.8,2.2);
\draw[thick,->] (0,0) to [out=-45,in=180] (10,-5);
\draw[thick,->] (10,-5) to [out=0,in=-135] (20,0);
\draw[fill=white] (0,0) circle [radius=3];
\draw[fill=white] (20,0) circle [radius=3];
\draw node at (30,0) {$=$};
\draw[thick,postaction={on each segment={mid arrow}}] (45,-5) -- (35,-5);

\draw[fill=blue] (37,5) circle [radius=0.75];
\draw[fill=blue] (43,5) circle [radius=0.75];
\draw[thick,blue,postaction={on each segment={mid arrow}}] (43,5) to [out=90,in=90] (37,5);
\draw[thick,blue,postaction={on each segment={mid arrow}}] (37,5) to [out=-90,in=-90] (43,5);

\draw[fill=blue] (37,0) circle [radius=0.75];
\draw[fill=blue] (43,0) circle [radius=0.75];
\draw[thick,blue,postaction={on each segment={mid arrow}}] (43,0) to [out=90,in=90] (37,0);
\draw[thick,blue,postaction={on each segment={mid arrow}}] (37,0) to [out=-90,in=-90] (43,0);

\end{tikzpicture}
\end{center}
\caption{The identity used to show a cut containing one interior leg which is blue in \ref{fig:dynamicexample} a is isometric up to a normalization factor. This identity is easily derived from that in figure \ref{fig:subperfect}.}
\label{fig:subperfect2}
\end{figure}

\begin{figure}
\begin{center}
\begin{tikzpicture}[scale=0.2]
\draw[thick,postaction={on each segment={mid arrow}}] (-2.2,-2.2) -- (-5,-5);
\draw[thick,postaction={on each segment={mid arrow}}] (-2.2,2.2) -- (-5,5);
\draw[thick,postaction={on each segment={mid arrow}}] (25,-5)--(22.2,-2.2);
\draw[thick,postaction={on each segment={mid arrow}}] (25,5) -- (22.2,2.2);
\draw[thick,->] (0,0) to [out=22.5,in=180] (10,2.5);
\draw[thick,->] (10,2.5) to [out=0,in=157.5] (20,0);
\draw[thick,->] (0,0) to [out=-22.5,in=180] (10,-2.5);
\draw[thick,->] (10,-2.5) to [out=0,in=-157.5] (20,0);
\draw[thick,blue,postaction={on each segment={mid arrow}}] (2.2,2.2) to [out=45,in=180] (7,5);
\draw[fill=blue] (7,5) circle [radius=0.75];
\draw[thick,blue,postaction={on each segment={mid arrow}}] (13,5)--(7,5);
\draw[fill=blue] (13,5) circle [radius=0.75];
\draw[thick,blue,postaction={on each segment={mid arrow}}] (13,5)to [out=0,in=135] (17.8,2.2);
\draw[thick,->] (0,0) to [out=-45,in=180] (10,-5);
\draw[thick,->] (10,-5) to [out=0,in=-135] (20,0);
\draw[fill=white] (0,0) circle [radius=3];
\draw[fill=white] (20,0) circle [radius=3];
\draw node at (30,0) {$=$};
\draw[thick,postaction={on each segment={mid arrow}}] (45,-5) -- (35,-5);
\draw[thick,postaction={on each segment={mid arrow}}] (45,0) -- (35,0);
\draw[fill=blue] (37,5) circle [radius=0.75];
\draw[fill=blue] (43,5) circle [radius=0.75];
\draw[thick,blue,postaction={on each segment={mid arrow}}] (43,5) to [out=90,in=90] (37,5);
\draw[thick,blue,postaction={on each segment={mid arrow}}] (37,5) to [out=-90,in=-90] (43,5);
\end{tikzpicture}
\end{center}
\caption{The identity used to show the cut $\gamma_1$ in figure \ref{fig:dynamicexample}a is isometric up to a normalization factor.}
\label{fig:subperfect}
\end{figure}
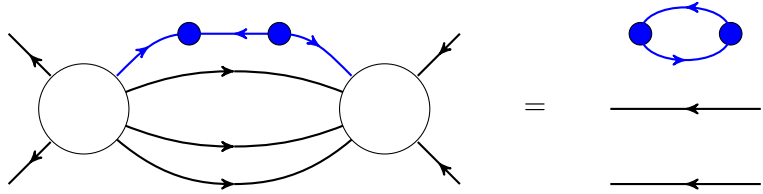

It is not difficult to see that no one of these networks contains isometric cuts for every boundary region, but the set of four together do. Consider for example the defining network, figure \ref{fig:dynamicexample} a, with $\mathcal{O}$ sitting on three of the interior legs (marked as blue legs). Consider first the subregion consisting of exterior legs attached to vertex f. The cut enclosing f and crossing three legs is isometric, this follows from the identity shown in figure \ref{fig:subperfect2}. Similarly, the cut $\gamma_2$ which encloses f and g can be shown to be isometric by use of the identity in figure \ref{fig:subperfect}. Further, the boundary legs adjacent to vertices f,g,b or c,d,e are also enclosed by isometric cuts contained in the network of figure \ref{fig:dynamicexample}a, since a minimal cut which crosses three interior blue legs is isometric. 

The need for additional networks to be included in the set $F$ arises when we consider subregions adjacent to two or fewer black interior legs. Take for example the subregion containing vertices b, c and d. The cut $\gamma_2$ which crosses one blue and two black interior legs is not isometric, nor is the other possible cut which crosses one black and two blue interior legs. To find an isometric cut enclosing b-c-d we consider network \ref{fig:dynamicexample} b. Since all the operator insertions now live on the cut $\gamma_2$ and $\gamma_2$ crosses a minimal number of legs, it is isometric. Similarly, an isometric cut for enclosing d-e-f can be found in network \ref{fig:dynamicexample} and an isometric cut for e-f-g in network \ref{fig:dynamicexample}. 

A complication arises in considering the region containing only c, d or e, or regions f-e, e-d, d-c, c-b. Consider region d-c, the remaining possibilities are handled similarly. In this case an isometric cut can be found in network \ref{fig:dynamicexample} b. To see this, we must return to the notion of a cut through the network. Recall that a cut $\gamma$ corresponds to a specification of projecting state, on which operators $C$ and $\bar{C}$ act to prepare the boundary state. Implicitly, choosing a cut involves specifying which legs of the projecting state are acted on by the operator $C$ and which by the operator $\bar{C}$, corresponding to our breakdown of the projecting state Hilbert space into $\mathcal{H}_B$ and $\mathcal{H}_{\bar{B}}$. To specify this in the graphical notation we can use a double line to specify a cut through the network. One cut $\gamma$ crosses the legs which are associated with the $B$ Hilbert space, with a second cut $\bar{\gamma}$ denoting the legs in the $\bar{B}$ Hilbert space. We have adopted this notation in figure \ref{fig:doubleline} to specify an isometric cut for the region c-d. It is straightforward to show that the operators defined by this cut are isometric, from which we can conclude that the mutual information across the $B$ and $\bar{B}$ subsystems of the projecting state is equal to the entropy of the boundary region c-d, as needed. 

\begin{figure}
\begin{center}
\includegraphics[scale=0.6]{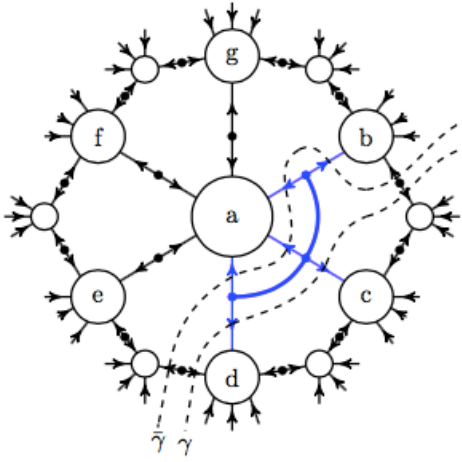}
\end{center}
\caption{Illustration of how to choose an isometric cut for the subregion consisting of legs adjacent to the d and c vertices. The upper dashed line crosses legs included in the $\bar{B}$ Hilbert space, while the lower dashed line crosses legs included in the $B$ Hilbert space. It is straightforward to check that both the operators above and below the dashed lines are isometries; it follows that $S(A) = \frac{1}{2}I(\bar{B}:B)$.}
\label{fig:doubleline}
\end{figure}

One advantage to the isometric cut formula is that it is not necessary to limit the networks which are included in the set $F$ searched over. Indeed, it is a consequence of our definitions that any cut which is isometric will have as its length the entropy of the enclosed boundary region. Thus although we specified only the four networks given in figure \ref{fig:dynamicexample} we could add arbitrary networks to the set $F$. Those without isometric cuts would not disturb the isometric cut formula at all, and any additional networks having isometric cuts would give unchanged values for the boundary entropies. 

It is not difficult to construct further examples of sets of networks satisfying the isometric cut formula based on the construction used here. Indeed, we may continue the pattern of contraction given by figure \ref{fig:sixlegexample}a and construct networks with an arbitrary number of layers. We can then proceed to replace a subset of the projecting maximally entangled pairs by non-maximally entangled states; allowing the same freedom of pushing operators through adjacent tensors then gives a set of networks associated with a fixed boundary state. We have not yet systematically studied which networks defined in this way contain isometric cuts for all possible boundary regions; doing so remains a direction for future work.

It is interesting to reconsider the dynamic picture developed earlier in the presence of bulk legs. If we assume we have a tensor network with the isometric subregion property, a nice picture emerges. We can consider evolving the bulk state forward in time by applying an operator to a portion of its Hilbert space. When we map this bulk state to the boundary by projecting it into the tensor network, we can push through this time evolution operator to the interior legs of the tensor network. The time evolution operator can then be treated as the same type of deformations to the network as were considered in this section, showing that local time evolution leads to deforming the Cauchy slice. Since we do not have examples of HaPPY networks which have the subregion isometry property, one could instead use the random tensor networks of section \ref{sec:randomreview}. 

\chapter{Holographic tensor networks away from AdS/CFT} \label{sec:networksfromgeometry}

In this chapter we consider starting with a continuous geometry and then building a network which approximates it. This allows us to construct networks which approximate spacetimes of interest other than AdS, for instance flat space. It is possible to do this by making use of the random tensor construction.

\section{Building a tensor network for flat space}\label{sec:flatspace}

In the random tensor construction it is possible to prove the RT formula when their are no bulk legs very generally \cite{hayden2016holographic}. In particular, there is no condition analogous to \ref{eq:dimensioncondition} when no bulk legs are present - a network constructed on any graph will display the RT formula. However, we should keep in mind that as with the other entropy calculations performed using random tensor techniques the RT formula is proven for random tensors in a limit of large bond dimension.

Since RT holds without restrictions on the graph geometry, we can try to construct networks that approximate flat space. This was already claimed in ref. \cite{hayden2016holographic}, however, we argue that the construction given there is problematic and give an alternative construction. We also wish to acknowledge that our construction borrows a technique from Bao et al. \cite{bao2015holographic}. 

We consider a disk $M = \{ (x,y) : 0 \leq x^2+y^2 \leq 1\}$ which is endowed with a distance function $d(u,v)$. Our goal is to fill in the disk with a planar tensor network which satisfies the Ryu-Takayanagi formula and whose minimal surfaces have lengths approximating the function $d(u,v)$. We introduce a parameter $\delta$ which represents a unit of length in the continuous geometry. In particular we say the network approximates the geometry of the disk to a resolution of $\epsilon$ if
\begin{align} \label{eq:approx}
\left|d(u,v) - \frac{\delta}{\log D} \cdot L(u,v)\right| \leq \epsilon,
\end{align}
where $L(u,v)$ is the graph length in the network, $D$ is the dimension of the legs in the network. The $\delta / \log{D}$ should be understood as a conversion factor from graph length (unitless) to physical length, so that $\delta$ is the length associated with cutting one leg of dimension $D$. Below we construct a network and show that, in this network, given any choice of resolution $\epsilon$ there is a choice of $\delta$ such that \ref{eq:approx} is satisfied.

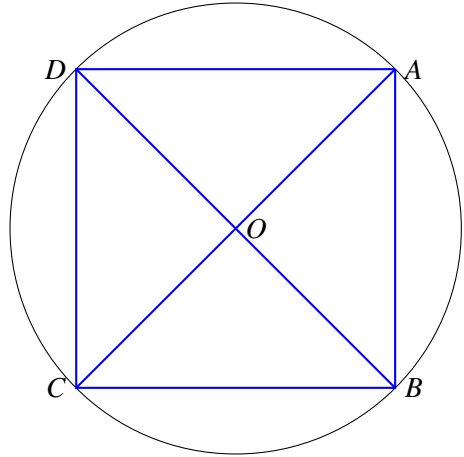
\begin{figure}
\begin{center}
\begin{tikzpicture}[scale=0.2]
\draw (0,0) circle [radius=15]; 
\draw[blue,thick] (10.6,10.6) -- (10.6,-10.6) -- (-10.6,-10.6) --(-10.6,10.6) --(10.6,10.6);
\draw[blue,thick] (10.6,10.6) -- (-10.6,-10.6);
\draw[blue,thick] (10.6,-10.6) -- (-10.6,10.6);
\node[right] at (0,0) {$O$};
\node[right] at (10.6,10.6) {$A$};
\node[right] at (10.6,-10.6) {$B$};
\node[left] at (-10.6,-10.6) {$C$};
\node[left] at (-10.6,10.6) {$D$};
\end{tikzpicture}
\end{center}
\caption{Illustration of our procedure for constructing a network whose minimal lengths approximate those of a given geometry. The example shown constructs a network with four boundary legs which approximates a disk shaped region of $\mathbb{R}^2$. In the first step, four boundary points are chosen and all of the minimal cuts anchored on those points are drawn. The minimal cuts form a planar graph, in this example the graph has vertices $A,B,C,D$ and $O$.}
\label{fig:stepone}
\end{figure}

Our strategy is to construct a graph whose minimal cuts satisfy \ref{eq:approx} and then populate that graph with random tensors of large bond dimension. With random tensors placed on the vertices, the results of ref. \cite{hayden2016holographic} then guarantee the Ryu-Takayanagi formula is satisfied. Let us consider as an example a disk which is a section of flat space, so $d(u,v) = \sqrt{(u_1-v_1)^2+(u_2-v_2)^2}$. A reasonable first approach is to tile the disk with a regular polygon. This can be done with triangles, squares, or hexagons. However, none of these tilings correctly reproduce lengths in the disk in the sense of \ref{eq:approx}. For example in a tiling with squares, the graph length function is 
\begin{align}
L(u,v) = |u_1 - v_1| + |u_2 - v_2|,
\end{align}
which doesn't approximate the Euclidean distance. Additionally, such a distance function gives highly degenerate minimal surfaces - for example a staircase shaped path gives the same distance between two boundary points as a path which turns only once. Regular tilings using triangles or hexagons produce similar graph distance functions and also have degenerate minimal surfaces.

\begin{figure}
\begin{center}
\begin{tikzpicture}[scale=0.2]
\draw (0,0) circle [radius=15]; 
\draw[dashed, blue, thin] (10.6,10.6) -- (10.6,-10.6) -- (-10.6,-10.6) --(-10.6,10.6) --(10.6,10.6);
\draw[dashed,blue,thin] (10.6,10.6) -- (-10.6,-10.6);
\draw[dashed,blue,thin] (10.6,-10.6) -- (-10.6,10.6);
\node[right] at (0,0) {$O$};
\node[right] at (10.6,10.6) {$A$};
\node[right] at (10.6,-10.6) {$B$};
\node[left] at (-10.6,-10.6) {$C$};
\node[left] at (-10.6,10.6) {$D$};

\draw[red,thick] (6,0) -- (18,0);
\draw[red,thick] (-6,0) -- (-18,0);
\draw[red,thick] (0,6) -- (0,18);
\draw[red,thick] (0,-6) -- (0,-18);
\draw[red,thick] (0,6) -- (6,0) -- (0,-6) -- (-6,0) -- (0,6);

\node[above right] at (0,6) {S};
\node[below right] at (6,0) {T};
\node[below right] at (0,-6) {U};
\node[below left] at (-6,0) {V};

\node[above] at (0,18) {W};
\node[below] at (0,-18) {X};
\node[right] at (18,0) {Y};
\node[left] at (-18,0) {Z};

\end{tikzpicture}
\end{center}
\caption{Illustration of the second step in our procedure for constructing a network which approximates a given geometry. In this step, the dual of the graph formed in step one is drawn. The edges of the dual graph are assigned a weight based on the $\mathbb{R}^2$ length of the edges in the direct graph they cut. For example, a weight of $\text{Floor}( \bar{AB} / \delta)$ is assigned to the edge $TY$, where $\delta x$ is a parameter with units of length controlling how closely the graph approximates lengths in the disk.}
\label{fig:steptwo}
\end{figure}
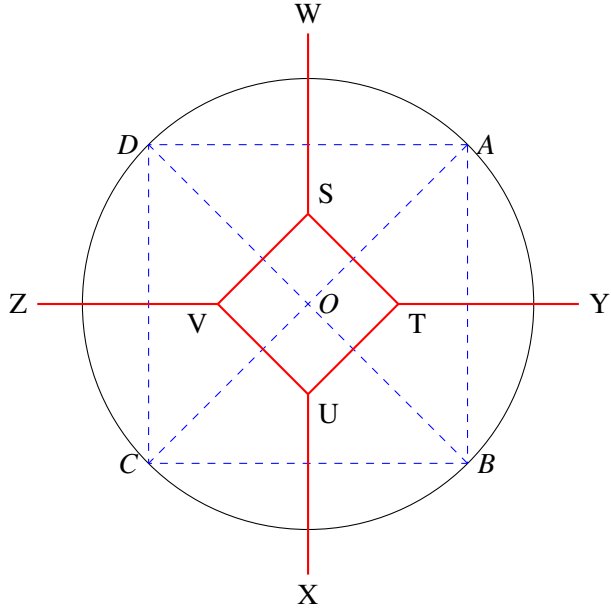

Our construction begins by specifying a set of points on the edge of the disk. The closeness of our approximation is set in part by the number of points on the boundary chosen, which we will denote by $N$. The construction proceeds by drawing every minimal surface between pairs of these points; this is illustrated in figure \ref{fig:stepone}. The resulting surfaces define a graph which we take to be the dual graph of the tensor network being constructed. Importantly, the edges in the direct graph are assigned a weighting $w_i$ set by
\begin{align}
w_i = \text{Floor}(d(u_j,u_k) / \delta).
\end{align}
Up to the rounding implemented by the floor function, the weight of the edges in the direct graph is given by the length of the edges in the dual graph which they cut, measured in units of $\delta$. Forming the direct graph from the dual graph and assigning the weightings is illustrated in figure \ref{fig:steptwo}.  

Finally, random tensors are placed on the vertices of the direct graph and the number of legs along an edge is chosen to be equal to the weighting $w_i$ associated with that leg. We can then show that the resulting tensor network has lengths which satisfy \ref{eq:approx}. To prove this, note that a cut in the tensor network is also a path in the dual graph. We will consider a minimal cut passing from $u_0\rightarrow u_N$ where the $u_i$ are vertices in the dual graph. Consider one segment of that path which passes from $u_i$ to $u_j$. The length of this segment is given by 
\begin{align}
L(u_i,u_j) = (\text{number of legs crossed})\log D.
\end{align}
The number of legs crossed is just $w_i$, which is the length of that segment of the path given in units of $\delta$, 
\begin{align}
L(u_i,u_j) = \text{Floor}\left(\frac{d(u_i,u_j)}{\delta}\right) \cdot \log D
\end{align}
Inserting this into \ref{eq:approx} gives that
\begin{align} 
\left|d(u_i,u_j) - \frac{\delta}{\log D} \cdot L(u_i,u_j)\right| \leq \delta.
\end{align}
The number of boundary points chosen, $N$, sets the maximal number of segments in a minimal cut through the disk, which we call $f(N)$\footnote{A chord $AB$ divides the points $C,D...$ on the circles edge into two sets of size $n_1$ and $n_2$ where $n_1+n_2\leq N-2$. Since every pairing of such points gives a chord which crosses $AB$ once we can bound the number of cuts through $AB$ by $((N-2)/2)^2$}. Then the triangle inequality gives that for a minimal path through the network
\begin{align}
d(u_0,u_n) \leq \delta f(N).
\end{align}
A network with resolution $\epsilon$ then can be constructed by choosing $\delta = \epsilon / f(N)$. 

It would be interesting to explore further the properties of the flat space network constructed here, and to understand if other flat space network constructions are possible. For example, the network here is far from being translationally invariant, and it is interesting to ask if a translationally invariant flat space network can be constructed. This seems unlikely, as translational invariance requires we use one of the regular tilings with triangles, squares, or hexagons, which produce non-unique minimal surfaces and distance measures which do not approximate flat space. One possibility however is to use a random tiling. In this case translational invaraince is restored at large enough distance scales, but minimal surfaces in the random tiling have some hope of being unique. 

It would also be interesting to understand if the flat space network constructed here, or a possible random tiling, could be upgraded to have bulk legs and to have the isometric subregion property. More generally, the construction here allows us to take any geometry and use its minimal surfaces to construct a corresponding tensor network. We can thus ask for any geometry about its properties as a mapping. 

\chapter{Final remarks}

One of the fundamental puzzles of quantum gravity concerns the exact relationship between geometry and entanglement. The tensor network has the basic advantage of giving an immediate relationship between these two apparently distinct ideas, at least at a discrete level. The work presented in this thesis regarding dynamic spacetimes strengthens this connection by showing tensor networks can capture not just the geometry of a special class of spacelike slices (constant time slices in static, asymptotically AdS spacetimes), but actually capture features of the geometry of arbitrary Cauchy slices in any asymptotically AdS spacetime.

From this perspective, a natural question poses itself: how can timelike directions be represented in a tensor network or similar formalism? Since in the dynamic tensor network picture it is the mutual information which naturally defines lengths, this seems connected to a standing question in quantum information theory of how to define a mutual information between a Hilbert space at a particular time and the same Hilbert space at a later time. 

Another direction the tensor network-geometry-entanglement connection might be pursued is towards a continuum picture. It would be interesting to understand how to take the continuum limit of a tensor network, or to use tensor network ideas to inspire the development of microscopic models that incorporate the entanglement-geometry connection naturally. Two active directions in this area are the development of continuous-MERA (cMERA) \cite{haegeman2013entanglement} and a new perspective on viewing Euclidean path integrals as continuum limits of tensor networks \cite{caputa2017liouville}.

As we began to address in section \ref{sec:networksfromgeometry}, the tensor network gives a way to think about entanglement and geometry outside of the context of AdS/CFT. As seen perhaps most explicitly in the random tensor models there can be bulk to boundary mappings where the bulk is not hyperbolic. This presents the possibility of toy models for flat space holography, a direction we feel has not yet been satisfactorily explored. It would be interesting to understand if the mapping from bulk to boundary defined by a flat space network can ever be an isometry. 

Finally, we mention a few other directions which are being pursued in recent holographic tensor network literature. First, an alternative approach to understanding the dynamics of tensor networks has appeared \cite{osborne2017dynamics}. The approach taken there focuses on extending the symmetries of a HaPPY network to include time translation. Nicely, they are able to show that the symmetry group of their HaPPY network forms a group which had been previously understood to approximate the conformal group. The approach given in this thesis has the advantage of displaying a network analogue of the maximin formula. It would be interesting to understand better the connection between these two approaches.

Suggestions have been made that complexity of a boundary CFT state be dual to the volume \cite{brown2015complexity} or action \cite{brown2016complexity} of a certain bulk region. This was initially in part inspired by the idea of a tensor network, which can also be understood as a quantum circuit, as building up spacetime. Indeed, the correct tensor network to describe a geometry may be related to the efficiency of that network in building the boundary state, as suggested by the MERA, which makes the complexity-volume proposal natural. A clear picture of how time evolution works in tensor networks would be useful in elucidating this potential connection.


\begin{singlespace}
\raggedright
\bibliographystyle{abbrvnat}
\bibliography{biblio}
\end{singlespace}

\appendix

\backmatter

\end{document}